\newcommand{\Sem}[1]{{[\![#1]\!]}}
\newcommand{\impl}{\Longrightarrow\xspace}
\newcommand{\hearts}{\heartsuit}
\renewcommand{\theta}{\vartheta}
\newcommand{\lthen}{\rightarrow}
\newcommand{\rank}{\mathit{rank}}
\newcommand{\Set}{\mathsf{Set}}
\newcommand{\M}{\hearts}
\newcommand{\cset}[1]{{\{ #1 \}}}
\newcommand{\tup}[1]{{\langle #1 \rangle}}
\newcommand{\bbrack}[1]{{\llbracket #1 \rrbracket}}
\newcommand{\fP}{\mathcal{P}}
\newcommand{\fQ}{\mathcal{Q}}
\newcommand{\natto}{\dot\to}
\newcommand*{\@old@slash}{}\let\@old@slash\slash
\def\slash{\relax\ifmmode\delimiter"502F30E\mathopen{}\else\@old@slash\fi}
\spnewtheorem{thm}[theorem]{Theorem}{\bfseries}{\itshape}
\spnewtheorem{cor}[theorem]{Corollary}{\bfseries}{\itshape}
\spnewtheorem{lem}[theorem]{Lemma}{\bfseries}{\itshape}
\spnewtheorem{lemdefn}[theorem]{Lemma and Definition}{\bfseries}{\itshape}
\spnewtheorem{propn}[theorem]{Proposition}{\bfseries}{\itshape}
\spnewtheorem{defn}[theorem]{Definition}{\bfseries}{\upshape}
\spnewtheorem{obs}[theorem]{Observation}{\bfseries}{\upshape}
\spnewtheorem{rem}[theorem]{Remark}{\bfseries}{\upshape}
\spnewtheorem{expl}[theorem]{Example}{\bfseries}{\upshape}
\spnewtheorem{thmdefn}[theorem]{Theorem and Definition}{\bfseries}{\itshape}
\spnewtheorem{propdefn}[theorem]{Proposition and Definition}{\bfseries}{\itshape}
\spnewtheorem{assn}[theorem]{Assumption}{\bfseries}{\upshape}
\spnewtheorem{algorithm}[theorem]{Algorithm}{\bfseries}{\upshape}
\spnewtheorem{conv}[theorem]{Convention}{\bfseries}{\upshape}
\spnewtheorem{notn}[theorem]{Notation}{\bfseries}{\upshape}
\spnewtheorem{open}[theorem]{Problem}{\bfseries}{\upshape}
\newcommand{\into}{\hookrightarrow}
\newcommand{\fD}{\+D} 
\newcommand{\fB}{\+B_{\infty}} 
\newcommand{\fmN}{\+M}
\begin{document}

\title{Simulations and Bisimulations For Coalgebraic Modal Logics}
\author{
 Daniel Gor{\'\i}n and Lutz Schr{\"o}der
}
\institute{Department of Computer Science, Universit{\"a}t Erlangen-N{\"u}rnberg}
\maketitle

\begin{abstract}
  We define a notion of $\Lambda$-simulation for coalgebraic modal
  logics, parametric on the choice $\Lambda$ of predicate liftings for
  a functor $T$.  We show this notion is adequate in several ways: i)
  it preserves truth of positive formulas, ii) for $\Lambda$ a
  separating set of monotone predicate liftings, the associated notion
  of $\Lambda$-bisimulation corresponds to $T$-behavioural equivalence
  (moreover $\Lambda$-$n$-bisimulations correspond to
  $T$-$n$-behavioural equivalence), and iii) in fact, for
  $\Lambda$-separating and $T$ preserving weak pullbacks, difunctional
  $\Lambda$-bisimulations are $T$-bisimulations. In essence, we arrive
  at a modular notion of equivalence that, when used with a separating
  set of monotone predicate liftings, coincides with $T$-behavioural
  equivalence regardless of whether $T$ preserves weak pullbacks
  (unlike the notion of $T$-bisimilarity).
\end{abstract}

\section{Introduction}

As the basic notion of equivalence in coalgebra, $T$-behavioural
equivalence has emerged, which declares two states to be equivalent if
they are identified by some pair of coalgebra morphisms; in case the
type functor $T$ admits a final coalgebra, $T$-behavioural equivalence
is just identification in the final $T$-coalgebra. As a proof
principle, however, $T$-behavioural equivalence is comparatively
unwieldy, thus motivating the search for bisimulation-type proof
principles whereby two states can be shown to be behaviourally
equivalent by exhibiting a \emph{bisimulation relation} between
them. The advantage of such approaches is that bisimulation relations
may be comparatively small, making equivalence proofs by bisimulation
more manageable than direct proofs of behavioural equivalence.

The downside is that while behavioural equivalence is a canonical
notion that works for any type of coalgebras, it is rather less clear
what a bisimulation is in general. In case the type functor preserves
weak pullbacks, the standard notion of \emph{$T$-bisimulation} gives a
satisfactory answer: it can be uniformly defined for any $T$, it is
always sound for $T$-behavioural equivalence, if $T$ preserves weak
pullbacks it is complete for $T$-behavioural equivalence, and it
coincides with standard notions in the main examples. For functors
that fail to preserve weak pullbacks, however, the search for a good
generic notion of bisimilarity remains largely open.

Here, we present a modally-inspired notion of bisimulation that partly
solves these problems, specifically it does so for functors that admit
a separating set of \emph{monotone} predicate liftings. Our notion of
\emph{$\Lambda$-bisimilarity} depends on distinguishing a modal
signature $\Lambda$ that we assume to consist of monotone
operators. Key features of $\Lambda$-bisimilarity are
\begin{itemize}
\item It is related to a corresponding notion of
  \emph{$\Lambda$-simulation}, which bears a clear relation to modal
  logic: all positive modal formulas over $\Lambda$ are preserved by
  $\Lambda$-simulations.
\item If $\Lambda$ is separating, then $\Lambda$-bisimulation is
  sound and complete for behavioural equivalence.
\item We have a finite-lookahead version of
  $\Lambda$-bisimilarity. This $\Lambda$-$n$-bisimilarity is sound and
  complete for the standard notion of $n$-behavioural equivalence
  defined via the terminal sequence.
\item $\Lambda$-bisimulation allows bisimulation proofs up to
  difunctionality (i.e.\ closure under zig-zags).
\item If $T$ preserves weak pullbacks, then $\Lambda$-bisimulations
  are essentially the same as $T$-bisimulations, at least when we
  restrict to difunctional relations.
\end{itemize}

\paragraph{Related Work:} Recent yet unpublished work by
Enqvist~\cite{Enqvist13} introduces a notion of
\emph{$\Lambda$-homomorphism} that is almost a special case of a
$\Lambda$-simulation, and in fact shows that such
$\Lambda$-homomorphisms can be induced by a relator in the sense
of~\cite{Levy11}, so that the notion of $\Lambda$-simulation can
itself be regarded as implicit in that work. When we say `almost', we
mean that the implication in the definition of $\Lambda$-homomorphism
goes the other way in Enqvist's work than it does here, so that in
particular Theorem~\ref{thm:sim-preserves-truth} would fail for his
notion. The notion of $\Lambda$-homomorphism in the version that
appears here has been under discussion between the authors' group and
international coauthors from late 2011.

In~\cite{MartiVenema12} it is shown that so-called \emph{lax
  extensions of $T$ preserving diagonals} induce notions of
bisimulation that are sound and complete for behavioural equivalence,
and that a finitary functor has such an extension iff it admits a
separating set of finitary monotone predicate liftings. Our result,
while otherwise working with similar assumptions, does not suppose
finitaryness of the functor.

In~\cite{Levy11} a generic theory of coalgebraic simulation is
developed using \emph{relators}. One can show that our notion of
$\Lambda$-simulation is induced by a relator and therefore subsumed by
that framework. We cannot currently make out that any of our results
about $\Lambda$-(bi)simulation could be obtained by instantiating the
generic results, however.

\section{Preliminaries}\label{sec:prelim}

The framework of \emph{coalgebraic modal logic}~\cite{Pattinson03}
covers a broad range of modalities beyond the standard relational setup,
including probabilistic and game-theoretic phenomena as well as neighbourhood
semantics and non-material conditionals~\cite{SchroderPattinson09a}.
This framework is parametric in syntax and semantics. The syntax is
given by a \emph{similarity type} $\Lambda$, i.e.\ a set of
\emph{modal operators} with finite arities $\ge 0$ (hence possibly
including propositional atoms). To simplify notation, we will pretend
that all operators are unary.

\begin{defn}\label{def:formulas}
  The set $L(\Lambda)$ of \emph{$\Lambda$-formulas} is given by the grammar:
  \begin{equation*}
    \phi,\psi::=\top\mid \lnot\phi \mid \phi\land\psi\mid\M\phi
    \qquad(\M\in\Lambda).
  \end{equation*}
\end{defn}

\noindent
We use the standard derived Boolean operators
$\lor$, $\lthen$, etc.
We use $\rank(\phi)$ to denote the maximum number of nested occurrences of
$\M \in \Lambda$ in $\phi$.

Semantics are parametrized by associating a $\Lambda$-structure
$\tup{T,\cset{\bbrack{\M_\lambda}}_{\lambda\in \Lambda}}$ to
a similarity type $\Lambda$. Here $T$ is an endofunctor $T$ on
the category $\Set$ and, each $\bbrack{\M_\lambda}$ is a
\emph{predicate lifting}, that is, a natural transformation
$\bbrack{\M} : \fQ \natto
\fQ\circ T^\mathit{op}$, where $\fQ$ is the contravariant powerset
functor $\Set^\mathit{op} \to \Set$ (that is, $\fQ X \mapsto 2^X$ for every set $X$,
and given $f:X\to Y$, $\fQ f : 2^Y \to 2^Y$ is given by
$\fQ f \mapsto \lambda A . f^{-1}[A]$). For the extension of predicate liftings
to the higher-arity case see~\cite{explimits}.

\begin{assn}\label{ass:inj}
  We can assume w.l.o.g.\ that $T$ preserves injective
  maps~\cite{Barr93}.  For convenience of notation, we will in fact
  sometimes assume that subset inclusions $X \into Y$ are mapped
  to subset inclusions $TX \into TY$. Moreover, we assume
  w.l.o.g.\ that $T$ is non-trivial, i.e.\ $TX=\emptyset\implies
  X=\emptyset$ (otherwise, $TX=\emptyset$ for all $X$).
\end{assn}

\noindent
We typically identify a similarity type $\Lambda$ and its associated
$\Lambda$-structure, and refer to both as $\Lambda$. Unless otherwise
stated, $T$ stands for the underlying functor of the given $\Lambda$-structure.

For a given choice of $\Lambda$, a model for $L(\Lambda)$ is just a
\emph{$T$-coalgebra} $\tup{X,\xi}$, i.e.\ a non-empty set $X$ (the set of
\emph{states}) and \emph{transition function} $\gamma : X \to TX$.
Given $x \in X$, the truth value of $L(\Lambda)$-formulas is defined as:
\begin{align}
 x \models_\gamma \top &\phantom{\iff} \text{ always}
 \\
 x \models_\gamma \lnot\phi  &\iff x \not\models_\gamma \phi
 \\
 x \models_\gamma \phi \land \psi &\iff
   x \models_\gamma \phi \text{ and } x \models_\gamma \psi
 \\
 x \models_\gamma \M\phi &\iff \gamma(x) \models \M\bbrack{\phi}_\gamma
\end{align}
 where $\bbrack{\phi}_\gamma$, the extension of $\phi$ in $\gamma$ is given by
 $\bbrack{\phi}_\gamma = \cset{z \in X \mid x \models_\gamma \phi}$.
 and for $t \in TX$ and $A \subseteq X$, $t \models \M A$
 is a more suggestive notation for $t \in \bbrack{\M}_X A$.
When clear from context, we shall write simply $x \models \phi$ and
$\bbrack{\phi}$.

\begin{expl}\label{ex:K}
  Coalgebras for the (covariant) finite powerset functor $\fP_\omega$ are
  finitely branching directed graphs. For a similarity type
  $\Lambda=\cset{\Box,\Diamond}$
  consider the associated predicate liftings:
    \begin{align}
      \bbrack{\Box}_X(A) &:= \cset{B \mid B \subseteq A}
      \\
      \bbrack{\Diamond}_X(A) &:= \cset{B \mid B \cap A \neq \emptyset}
    \end{align}
    They correspond to the classical modal operators of relational
    modal logics, so the logic we get in this case is essentially the
    mono-modal version of the Hennessy-Milner logic~\cite{HM80}. To
    obtain the basic modal logic $K$ one needs to enrich the coalgebra
    structure with an interpretation for propositions.  So let $V$ be
    a set of proposition symbols and let $C_V$ be the constant functor
    that maps every set $X$ to $2^V$. For each $p \in V$, the
    (nullary) predicate lifting $\bbrack{p}_X := \cset{\pi \in 2^V
      \mid p \in \pi}$ describes structures satisfying $p$. The Kripke
    functor $K$ is then defined as $KX := C_V \times \fP X$ and the
    similarity type $\Lambda = V \cup \cset{\Diamond,\Box}$ is
    interpreted using the corresponding predicate liftings on the
    appropriate projections.
\end{expl}

\begin{expl}
 The language of \emph{graded modal logic} corresponds to the set
 $\Lambda = \cset{\Diamond_k \mid k \in \mathbb{N}}$ and is interpreted over the
 infinite multiset functor $\fB$, i.e.,
 $\fB X \mapsto \cset{f : X \to \mathbb{N} \cup \cset{\infty} \mid \text{$f$ has finite support}}$.
 Coalgebras for $\fB$ are finitely branching multigraphs (with potentially infinite cardinalities).
 Interpretation of the modal operators is by way of the following family of predicate liftings,
 for each $k \in \mathbb{N}$:
 \begin{equation}\textstyle
   \bbrack{\Diamond_k}_X(A) := \cset{b \in \fB X \mid b(A) > k}\enspace
 \end{equation}
 where by $b(A)$ we denote $\sum_{x \in A}b(x)$, i.e.\ we use $b\in\fB
 X$ like measure on $X$.
\end{expl}

\begin{expl}
 Probabilistic modal logics are obtained when one takes the functor $\fD$
 that maps $X$ to the set of finitely-supported probability distributions over $X$.
 For the language $\Lambda_M=\cset{M_p\mid p\in [0,1]\cap\mathbb{Q}}$, with $M_p$ informally
 read as ``with probability more than $p$'', the
 corresponding predicate liftings are defined analogously as for graded modal logics.
 One can instead take $\Lambda_P=\cset{L_p\mid p\in [0,1]\cap\mathbb{Q}}$, with
 $L_p$ read as ``with probability at least $p$'', and interpreted using:
   \begin{equation}\textstyle
     \bbrack{L_p}_X(A) := \cset{\mu \in \fD X \mid \mu(A) \ge k}\enspace .
   \end{equation}
\end{expl}

\begin{expl}
 As a final example, consider the subfunctor $\mathcal{M}$ of $\fQ \circ \fQ$ given by
 $\mathcal{M} X = \cset{S \in \fQ\fQ X| \text{$S$ is upwards closed}}$. Over this functor
 one can obtain the monotone neighborhood semantics of modal logic with $\Lambda = \cset{\Box}$
 using the predicate lifting $\bbrack{\Box}_X(A) := \cset{S \in \mathcal{M}X \mid A \in S}$.
\end{expl}

\noindent
A modal operator $\M$ is called \emph{monotone} if it satisfies the condition
\begin{equation}\label{eq:monot}
A \subseteq B \subseteq X \text{ implies } \bbrack{\M}_X A \subseteq \bbrack{\M}_X B\enspace .
\end{equation}
While all the examples above are monotone, it is worth stressing that
the framework of coalgebraic modal logics can indeed accommodate non-monotone logics.
We will however focus on the monotone case.

\begin{assn}
In the following, we assume all modal operators to be monotone.
\end{assn}

\noindent
For a given endofunctor $T$, the choice of both the similarity type
$\Lambda$ and the associated $\Lambda$-structure over $T$ may vary
(although the number of choices is formally limited~\cite{explimits}),
and each choice yields a potentially different logic. When the choice
of predicates of liftings in $\Lambda$ is rich enough as to uniquely
describe every element in $TX$, we call such $\Lambda$
\emph{separating}~\cite{expterminal}:
\begin{defn}
We say that $\Lambda$ is \emph{separating} if $t \in TX$ is uniquely
determined by the set
$\cset{(\M, A) \in \Lambda \times \fP X \mid t \models \M A}$.
\end{defn}

\noindent It is not hard to see that, for example, $V \cup
\cset{\Box}$ as well as $V \cup \cset{\Diamond}$ are separating over
the Kripke functor $K$ of Example~\ref{ex:K}. The reader is referred
to~\cite{explimits} for characterizations of functors that admit
separating sets of predicate liftings.

\begin{defn}
  Given $T$-coalgebras $C$ and $D$, we say that states $x$ in $C$ and
  $y$ in $D$ are \emph{behaviourally equivalent}, and write
  $(C,x)\approx(D,y)$, or shortly $x\approx y$, whenever there exists
  a $T-$coalgebra $E$ and coalgebra morphisms $f : C \to E$ and $g : D
  \to E$ such that $f(x) = g(y)$.
\end{defn}
\noindent Simulations like the ones we will present in
Section~\ref{sec:sim} occur frequently when dealing with logics that
do not contain a Boolean basis; typically, negation is absent or only
allowed on restricted positions (e.g., in front of atoms).  The notion
of \emph{positive} formula is a generalization of this idea.

\begin{defn}
  The language $L^+(\Gamma)$ of \emph{positive $\Lambda$-formulas} is given by:
  \begin{equation*}
    \phi,\psi::=\top\mid\bot\mid \phi\land\psi\mid\phi\lor\psi\mid\M\phi
    \qquad(\M\in\Lambda).
  \end{equation*}
\end{defn}

\noindent
We can regard $L^+(\Lambda)$ as a syntactic fragment of $L(\Lambda)$
where $\lor$ is now taken as primitive.  The Boolean connectives of
$L^+(\Lambda)$ allow expressing all the monotone Boolean functions,
but notice that $\Lambda$ may contain dual operators (e.g.,
$\Lambda=\cset{\Box,\Diamond}$) --- in fact if $\Lambda$ is closed
under dual operators then $L^+(\Lambda)$ is as expressive as
$L(\Lambda)$. In general, of course, $L^+(\Lambda)$ is a proper
fragment of $L(\Lambda)$.

\section{Coalgebraic simulation}\label{sec:sim}

We now proceed to introduce our notion of modal simulation. We use
standard notation for relations; in particular, given a binary
relation $S\subseteq X\times Y$ and $A\subseteq X$, we denote by
$S[A]$ the relational image $S[A]=\{y\mid\exists x\in A.\,xSy\}$.

\begin{defn}[$\Lambda$-Simulation, $\Lambda$-Homomorphism]
  Let $C=(X,\xi)$ and $D=(Y,\zeta)$ be $T$-coalgebras. A
  \emph{$\Lambda$-simulation} $S:C\to D$ (\emph{of $D$ by $C$}) is a
  relation $S\subseteq X\times Y$ such that whenever $xSy$ then for
  all $\M\in\Lambda$ and all $A\subseteq X$
  \begin{equation*}
    \xi(x)\models \M A\text{ implies }\zeta(y)\models
    \M S[A].
  \end{equation*}
  A function $f:X\to Y$ is a \emph{$\Lambda$-homomorphism} if its
  graph is a $\Lambda$-simulation.
\end{defn}

\begin{lem}\label{lem:stability}
  $\Lambda$-simulations are stable under unions and relational
  composition. Moreover, equality is always a $\Lambda$-simulation.
\end{lem}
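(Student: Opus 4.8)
The plan is to verify each of the three closure properties separately, all of them by direct unfolding of the definition of $\Lambda$-simulation together with elementary facts about relational image and the monotonicity assumption on the modal operators.

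First, for closure under unions: suppose $\{S_i\}_{i\in I}$ are all $\Lambda$-simulations $C\to D$, let $S=\bigcup_i S_i$, and assume $x\,S\,y$, so $x\,S_i\,y$ for some $i$. Given $\M\in\Lambda$ and $A\subseteq X$ with $\xi(x)\models\M A$, we get $\zeta(y)\models\M S_i[A]$ since $S_i$ is a simulation; now $S_i[A]\subseteq S[A]$, so monotonicity of $\M$ (condition~\eqref{eq:monot}) yields $\zeta(y)\models\M S[A]$. This is the one place where the standing monotonicity assumption is genuinely used.

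Second, for composition: let $S:C\to D$ and $S':D\to E$ with $E=(Z,\theta)$, and suppose $x\,(S';S)\,z$ — I will use the convention that there is $y$ with $x\,S\,y$ and $y\,S'\,z$. Given $\M\in\Lambda$, $A\subseteq X$, $\xi(x)\models\M A$: applying $S$ gives $\zeta(y)\models\M S[A]$, then applying $S'$ to the set $S[A]\subseteq Y$ gives $\theta(z)\models\M S'[S[A]]$, and $S'[S[A]]=(S';S)[A]$, so we are done. Here no monotonicity is needed; only the elementary identity $S'[S[A]]=(S'\circ S)[A]$ for relational composition and image.

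Third, that the equality relation $\Delta_X$ on $X$ is a $\Lambda$-simulation $C\to C$: if $x\,\Delta_X\,x$ and $\xi(x)\models\M A$, then since $\Delta_X[A]=A$ we immediately have $\xi(x)\models\M\Delta_X[A]$. I do not anticipate any real obstacle; the only subtlety worth flagging is the direction of the relational-image bookkeeping in the composition case (making sure $S'[S[A]]$ is exactly $(S';S)[A]$ under whichever order-of-composition convention is in force), and the single invocation of monotonicity in the union case, which should be pointed out explicitly since it foreshadows why the monotone setting is the natural home for these notions.
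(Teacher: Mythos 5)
Your proof is correct; the paper states Lemma~\ref{lem:stability} without proof, treating it as routine, and your argument is exactly the expected one: unfold the definition, use $S_i[A]\subseteq S[A]$ plus monotonicity for unions, the identity $S'[S[A]]=(S'\circ S)[A]$ for composition, and $\Delta[A]=A$ for equality. Your observation that monotonicity is genuinely needed only in the union case (and nowhere else) is accurate and worth the emphasis you give it.
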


\begin{defn}[$\Lambda$-ordering]
  The \emph{$\Lambda$-preorder} $\le_\Lambda$ on $TX$ is defined by
  \begin{equation*}
    s\le_\Lambda t\iff \forall\M\in\Lambda,A\subseteq X.
    (s\models\M A\impl t\models\M A).
  \end{equation*}
\end{defn}

\begin{lem}\label{lem:hom}
  Let $C=(X,\xi)$ and $D=(Y,\zeta)$ be $T$-coalgebras. A map $f:X\to
  Y$ is a $\Lambda$-homomorphism iff for all $x\in Y$,
  \begin{equation}\label{eq:hom}
    Tf(\xi(x))\le_\Lambda\zeta(f(x)).
  \end{equation}
\end{lem}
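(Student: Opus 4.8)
The plan is to unfold both sides of the claimed equivalence and bridge them using naturality of the predicate liftings together with monotonicity. First I would note that for the graph $S=\{(x,f(x))\mid x\in X\}$ one has $S[A]=f[A]$, so that $f$ being a $\Lambda$-homomorphism means precisely: for all $x\in X$, $\M\in\Lambda$ and $A\subseteq X$, if $\xi(x)\models\M A$ then $\zeta(f(x))\models\M f[A]$. On the other side, condition~\eqref{eq:hom} unfolds directly from the definition of $\le_\Lambda$ to: for all $x\in X$, $\M\in\Lambda$ and $B\subseteq Y$, if $Tf(\xi(x))\models\M B$ then $\zeta(f(x))\models\M B$. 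So the task is to pass between quantification over $A\subseteq X$ (with conclusion about $f[A]$) and quantification over $B\subseteq Y$ (with conclusion about $B$).

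The only extra ingredient needed is the naturality square of $\bbrack{\M}$ for the map $f:X\to Y$, which in the $\models$-notation reads: for every $B\subseteq Y$ and every $t\in TX$, $t\models\M f^{-1}[B]$ iff $Tf(t)\models\M B$. Together with the set-theoretic inclusions $A\subseteq f^{-1}[f[A]]$ and $f[f^{-1}[B]]\subseteq B$ and the monotonicity of each $\M$, this suffices. For the ``if'' direction, assume~\eqref{eq:hom} and suppose $\xi(x)\models\M A$; by monotonicity and $A\subseteq f^{-1}[f[A]]$ we get $\xi(x)\models\M f^{-1}[f[A]]$, hence $Tf(\xi(x))\models\M f[A]$ by naturality, and then~\eqref{eq:hom} yields $\zeta(f(x))\models\M f[A]$. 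For the ``only if'' direction, assume $f$ is a $\Lambda$-homomorphism and suppose $Tf(\xi(x))\models\M B$; by naturality $\xi(x)\models\M f^{-1}[B]$, so the homomorphism property applied to $A=f^{-1}[B]$ gives $\zeta(f(x))\models\M f[f^{-1}[B]]$, and monotonicity together with $f[f^{-1}[B]]\subseteq B$ gives $\zeta(f(x))\models\M B$.

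The argument is essentially a routine diagram chase; the only place where care is needed is getting the orientation of the naturality square right (equivalently, keeping straight that $\bbrack{\M}$ is a transformation $\fQ\natto\fQ\circ T^{\mathit{op}}$, so it mediates between $f^{-1}[-]$ on $X$ and $(Tf)^{-1}[-]$ on $TX$) and invoking monotonicity in the correct direction on each side — one uses $A\subseteq f^{-1}[f[A]]$ in the ``if'' part and $f[f^{-1}[B]]\subseteq B$ in the ``only if'' part. I do not expect any genuine obstacle here; essentially this lemma repackages the defining property of a $\Lambda$-simulation along a function graph into a pointwise statement about the $\Lambda$-preorder.
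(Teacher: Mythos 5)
Your proof is correct and follows essentially the same route as the paper's: both directions are handled by the naturality square $t\models\M f^{-1}[B]\iff Tf(t)\models\M B$ combined with monotonicity via the inclusions $A\subseteq f^{-1}[f[A]]$ and $f[f^{-1}[B]]\subseteq B$, exactly as in the paper. No gaps.
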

\begin{proof}
  \emph{`Only if':} Let $\M\in\Lambda$, $A\subseteq Y$. Then
  \begin{align*}
    Tf(\xi(x))\models\M A  \iff& \xi(x)\models\M f^{-1}[A] &&\text{(naturality)}\\
     \impl & \zeta(f(x))\models\M f[f^{-1}[A]] && \text{(simulation)}\\
     \impl & \zeta(f(x))\models\M A && \text{(monotony)}.
  \end{align*}

  \emph{`If':} Let $\xi(x)\models\M A$. We have to show
  $\zeta(f(x))\models\M f[A]$, which will follow by
  (\ref{eq:hom}) from $Tf(\xi(x))\models\M f[A]$. By naturality,
  the latter is equivalent to $\xi(x)\models\M
  f^{-1}[f[A]]$. This however follows from $\xi(x)\models\M A$ by
  monotony.\qed
\end{proof}
\begin{rem}
  In the notation of the above lemma, another equivalent formulation
  of $f$ being a $\Lambda$-homomorphism is that $\xi(x)\models\M
  f^{-1}[A]$ implies $\zeta(f(x))\models\M A$ for $\M\in\Lambda$,
  $A\subseteq Y$. This is an immediate consequence of the lemma by
  naturality of predicate liftings.
\end{rem}
As announced, $\Lambda$-simulations preserve the truth of positive
modal formulas over $\Lambda$:
\begin{thm}\label{thm:sim-preserves-truth}
  If $S$ is a simulation and $xSy$, then $x\models\phi$ implies
  $y\models\phi$ for every positive $\Lambda$-formula $\phi$.
\end{thm}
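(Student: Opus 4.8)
The plan is to prove this by structural induction on the positive formula $\phi$, showing that for every $\phi \in L^+(\Lambda)$ and every pair $xSy$, $x \models \phi$ implies $y \models \phi$. The base cases $\phi = \top$ and $\phi = \bot$ are trivial: $\top$ holds everywhere, and $\bot$ holds nowhere so the implication is vacuous.

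For the Boolean connectives, suppose the claim holds for $\psi_1$ and $\psi_2$. If $\phi = \psi_1 \land \psi_2$ and $x \models \phi$, then $x \models \psi_1$ and $x \models \psi_2$; applying the induction hypothesis to each conjunct (using the same pair $xSy$) gives $y \models \psi_1$ and $y \models \psi_2$, hence $y \models \phi$. The case $\phi = \psi_1 \lor \psi_2$ is analogous, picking whichever disjunct $x$ satisfies. Note this is exactly where we use that $L^+$ excludes negation — there is no case for $\lnot\psi$, which would require the converse implication.

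The modal case is the heart of the argument. Suppose $\phi = \M\psi$ with the induction hypothesis holding for $\psi$, and suppose $x \models \M\psi$, i.e.\ $\xi(x) \models \M\bbrack{\psi}$. Let $A = \bbrack{\psi}_\xi \subseteq X$ be the extension of $\psi$ in $C$ and $B = \bbrack{\psi}_\zeta \subseteq Y$ its extension in $D$. By the definition of $\Lambda$-simulation applied to $xSy$ and the set $A$, from $\xi(x) \models \M A$ we get $\zeta(y) \models \M S[A]$. Now the induction hypothesis says precisely that $S[A] \subseteq B$: if $y' \in S[A]$ then $x' S y'$ for some $x' \in A = \bbrack{\psi}$, so $x' \models \psi$, so by IH $y' \models \psi$, i.e.\ $y' \in B$. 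Finally, by monotonicity of $\M$ (Equation~\eqref{eq:monot}), $\zeta(y) \models \M S[A]$ and $S[A] \subseteq B$ yield $\zeta(y) \models \M B$, which is exactly $y \models \M\psi$.

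The main obstacle — really the only non-formal step — is lining up the relational image $S[A]$ with the semantic extension $B$ and then closing the gap via monotonicity; this is why the monotonicity assumption on $\Lambda$ is essential here, and it explains the earlier remark that the implication direction in the definition of $\Lambda$-simulation (from $C$-side to $D$-side, combined with monotone liftings) is exactly what makes this theorem go through rather than its converse.
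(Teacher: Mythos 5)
Your proof is correct and follows essentially the same route as the paper's: induction on $\phi$ with trivial Boolean cases, and in the modal case applying the simulation condition to $A=\bbrack{\psi}$ to get $\zeta(y)\models\M S[A]$, then using the induction hypothesis ($S[A]\subseteq\bbrack{\psi}$ in $D$) together with monotonicity to conclude $y\models\M\psi$. Your write-up just makes explicit the monotonicity step that the paper leaves implicit in its displayed chain of implications.
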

\begin{proof}
  Induction over $\phi$, with trivial Boolean cases (noting that these
  do not include negation). For the modal case, we have
  \begin{align*}
    x\models\M\phi & \iff \xi(x)\models \M\Sem{\phi}\\
    & \impl \zeta(y)\models\M\{\{y'\mid
    \exists x'.(x'\models\phi\land x'Sy')\}\\
    & \impl \zeta(y)\models\M\Sem{\phi}\\
    & \iff y\models\M\phi.
  \end{align*}
  \qed
\end{proof}

\begin{expl}
  \begin{enumerate}
  \item When $\Lambda=\{\Diamond\}$, then a $\Lambda$-simulation
    $S:C\to D$ is just a simulation $C\to D$ in the usual
    sense. (Proof: `only if': if $xSy$ and $x'\in\xi(x)$, then
    $\xi(x)\models\Diamond\{x'\}$ and hence
    $\zeta(y)\models\Diamond\{y'\mid x'Sy'\}$, i.e.\ there exists $y'$
    such that $x'Sy'$ and $y'\in\zeta(y)$. `If': If
    $\xi(x)\models\Diamond A$, then there exists $x'\in A\cap\xi(x)$
    and hence we have $y'\in\zeta(y)$ such that $x'Sy'$, so that
    $\zeta(y)\models\Diamond\{y''\mid\exists x''\in\xi(x).\,
    x''Sy''\}$.)
  \item When $\Lambda=\{\Box\}$, then a $\Lambda$-simulation $S:C\to
    D$ is just a simulation $D\to C$ in the usual sense. (Proof: `only
    if': Let $xSy$ and $y'\in\zeta(y)$. Assume that we cannot find
    $x'\in\xi(x)$ such that $x'Sy'$; that is,
    $\xi(x)\models\Box\{x'\mid \neg(x'Sy')$. Then by the definition of
    $\Lambda$-simulation, $\zeta(y)\models\Box A$ for an $A$ with
    $y'\notin A$, contradiction. `If': Let $\xi(x)\models\Box A$. To
    show that $\zeta(y)\models\Box\{y'\mid\exists x'\in A.\,x'Sy'\}$,
    let $y'\in\zeta(y)$. By the simulation property, there exists
    $x'\in\xi(x)$ such that $x'Sy'$, and since $\xi(x)\models\Box A$,
    we have $x'\in A$.)
  \item For probabilistic modal logic, with $\Lambda=\{L_p\mid p\in
    [0,1]\cap\mathbb{Q}\}$, a relation $S\subseteq X\times Y$ between
    $\fD$-coalgebras $(X,\xi)$ and $(Y,\zeta)$ is a
    $\Lambda$-simulation iff for all $xSy$ and all $A\subseteq X$,
    \begin{equation*}
      \zeta(y)(S[A])\ge\xi(x)(A)
    \end{equation*}
    (keep in mind that $\xi(x)$ and $\zeta(y)$ are probability
    measures that we can apply to subsets). The same comes out when we
    take $\Lambda=\{M_p\mid p\in [0,1]\cap\mathbb{Q}\}$. Note that
    standardly, probabilistic bisimulations (see the next section for
    the definition of bisimulations) are defined only for the case
    where $S$ is an equivalence relation, in which case the notion
    coincides with the above.
  \item For graded modal logic, with $\Lambda=\{\Diamond_k\mid
    k\in\mathbb{N}\}$, we obtain the same inequality characterizing
    $\Lambda$-simulations as for probabilistic logic (keeping in mind
    that we can see $\xi(x)\in\fB(X)$, $\zeta(y)\in\fB(Y)$ as discrete
    $\mathbb{N}\cup\{\infty\}$-valued measures).
  \item For monotone neighbourhood logic, with $\Lambda=\{\Box\}$, we
    have that a relation $S\subseteq X\times Y$ between
    $\fmN$-coalgebras $(X,\xi)$ and $(Y,\zeta)$ is a
    $\Lambda$-simulation iff for $xSy$, $A\in\xi(x)$ implies
    $S[A]\in\zeta(y)$. This is easily seen to be equivalent to the
    forth condition in the definition of monotone bisimulation,
    attributed to Pauly in~\cite{HansenKupke04}.
  \end{enumerate}
\end{expl}
\noindent
For many purposes, simulations can be already too strong, e.g.\ when
we are interested in preservation results for positive formulas up to
a certain modal depth. It is therefore natural to consider
$n$-simulations.

\begin{defn}[$\Lambda$-$n$-simulation]
  Let $C=(X,\xi)$ and $D=(Y,\zeta)$ be $T$-coalgebras.  We define the
  notion of \emph{$\Lambda$-$n$-simulation} inductively as follows.
  Any $S_0 \subseteq X \times Y$ is a $\Lambda$-0-simulation. A
  relation $S_{n+1}\subseteq X \times Y$ is a
  $\Lambda$-$(n+1)$-simulation if there exists a
  $\Lambda$-$n$-simulation $S_n$ such that $S_{n+1} \subseteq S_n$ and
  for all $x,y$, $xS_{n+1}y$ implies that for all $\M\in\Lambda$,
  $A\subseteq X$
  \begin{equation*}
    \xi(x)\models \M A \text{ implies } \zeta(y)\models
    \M S_n[A].
  \end{equation*}
\end{defn}

\begin{thm}
  If $S$ is a $\Lambda$-$n$-simulation and $xSy$, then $x\models\phi$ implies
  $y\models\phi$ for every positive $\Lambda$-formula $\phi$ of rank at most $n$.
\end{thm}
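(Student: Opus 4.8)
The plan is to mimic the proof of Theorem~\ref{thm:sim-preserves-truth}, but to carry along the extra bookkeeping on modal rank and to feed it through the inductive definition of $\Lambda$-$n$-simulation. So I would do a double induction: an outer induction on $n$ and, for fixed $n$, an inner induction on the structure of the positive formula $\phi$ with $\rank(\phi)\le n$.

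First I would dispatch the base case $n=0$: a positive formula of rank $0$ is built from $\top$, $\bot$, $\land$, $\lor$ only, and any relation is a $\Lambda$-$0$-simulation, so the claim reduces to the trivial observation that Boolean-free-of-modalities positive formulas have the same truth value everywhere (more precisely, $x\models\phi$ is determined purely by the constants $\top,\bot$, hence preserved). For the inductive step, suppose the statement holds for $n$ and let $S=S_{n+1}$ be a $\Lambda$-$(n+1)$-simulation, witnessed by a $\Lambda$-$n$-simulation $S_n$ with $S_{n+1}\subseteq S_n$. I would then run the inner induction on $\phi$ with $\rank(\phi)\le n+1$. The Boolean cases $\top,\bot,\land,\lor$ are immediate from the inductive hypothesis on subformulas (whose rank is still $\le n+1$), and crucially they do \emph{not} involve negation, so monotonicity of truth preservation is not disturbed.

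The modal case is the heart of the argument. Suppose $\phi=\M\psi$ with $\rank(\phi)\le n+1$; then $\rank(\psi)\le n$. Assume $xS_{n+1}y$ and $x\models\M\psi$, i.e.\ $\xi(x)\models\M\Sem{\psi}$. By the defining clause of $\Lambda$-$(n+1)$-simulation (applied to $A=\Sem{\psi}$), we get $\zeta(y)\models\M S_n[\Sem{\psi}]$. Now the outer induction hypothesis applies: since $S_n$ is a $\Lambda$-$n$-simulation and $\rank(\psi)\le n$, for every $x'\in\Sem{\psi}$ and every $y'$ with $x'S_ny'$ we have $y'\models\psi$; hence $S_n[\Sem{\psi}]\subseteq\Sem{\psi}$ (semantics computed in $D$). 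By monotony of $\M$, $\zeta(y)\models\M\Sem{\psi}$, i.e.\ $y\models\M\psi$, as desired. The only subtlety — and the step I would flag as the one to get right — is the interleaving of the two inductions: the modal case legitimately invokes the \emph{outer} hypothesis (for $n$, on the strictly smaller-rank subformula $\psi$, using the witnessing relation $S_n$), not the inner one, whereas the Boolean cases invoke the inner hypothesis on subformulas of the same rank. Everything else is a direct transcription of the rank-free proof, with monotony of the predicate liftings doing exactly the same work as before.
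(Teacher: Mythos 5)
Your proposal is correct and follows essentially the same route as the paper: an outer induction on $n$ (with the $n=0$ case trivial since rank-$0$ positive formulas are equivalent to $\top$ or $\bot$) and an inner induction on $\phi$, where the modal case applies the $\Lambda$-$(n+1)$-simulation clause to $A=\Sem{\psi}$ and then uses the outer induction hypothesis for the witnessing $\Lambda$-$n$-simulation $S_n$ together with monotony to pass from $\M S_n[\Sem{\psi}]$ to $\M\Sem{\psi}$. Your explicit remark on which induction hypothesis is invoked where matches the paper's (terser) "outer IH $+$ monotony" step exactly.
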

\begin{proof}
  Induction on $n$. The base case $n=0$ is trivial since then $\phi$
  is equivalent to either $\top$ or $\bot$. For $n > 0$, we proceed
  by induction on $\phi$, the interesting case being:
  \begin{align*}
    x\models\M\psi  \iff & \xi(x)\models \M\Sem{\psi}\\
     \impl & \zeta(y)\models\M S_{n-1}[\Sem{\psi}]\\
     \impl & \zeta(y)\models\M\Sem{\psi}&&\text{(outer IH $+$ monotony)}\\
     \iff & y\models\M\psi.
  \end{align*}\qed
\end{proof}

\section{Bisimulations for all}

The notion of $\Lambda$-($n$)-simulation naturally yields a notion of
bisimulation (i.e., simulations in both directions). The yardstick for
any notion of bisimulation is \emph{$T$-behavioural equivalence} (see
Section~\ref{sec:prelim}). We say that a notion of bisimulation is
\emph{sound for $T$-behavioural equivalence} if any two states related
by bisimulation are $T$-behaviourally equivalent, and \emph{complete
for $T$-behavioural equivalence} if any two $T$-behaviourally
equivalent states can be related by a bisimulation.

The standard coalgebraic notion of \emph{$T$-bisimulation} that we
recall below is always sound for $T$-behavioural equivalence, and
complete for $T$-behavioural equivalence if $T$ preserves weak pullbacks.
We will show that our notion of $\Lambda$-bisimilarity is always sound
and complete for $T$-behavioural equivalence, provided that $\Lambda$ is
separating.
Notice also that $\Lambda$-bisimulations enjoy nice closure properties,
in particular under unions and composition, which for $T$-bisimulations
is only the case, again, when $T$ preserves weak pullbacks.

  \begin{defn}
    If $S$ and its converse $S^{-1}$ are $\Lambda$-$n$-simulations,
    then $S$ is a \emph{$\Lambda$-$n$-bisimulation}. Analogously, a
    \emph{$\Lambda$-bisimulation} is a $\Lambda$-simulation $S$ such
    that $S^{-1}$ is a $\Lambda$-simulation as well.
  \end{defn}
  \begin{lem}\label{lem:mor}
    If $C$, $D$ are $T$-coalgebras and $f:C\to D$ is a coalgebra
    morphism, then the graph of $f$ is a $\Lambda$-bisimulation.
  \end{lem}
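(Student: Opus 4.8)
The plan is to show that both the graph of $f$ and its converse are $\Lambda$-simulations. For the graph of $f$ itself, this is exactly the statement that $f$ is a $\Lambda$-homomorphism, and by Lemma \ref{lem:hom} it suffices to verify the inequality $Tf(\xi(x)) \le_\Lambda \zeta(f(x))$ for all $x$. Since $f$ is a coalgebra morphism we have the strictly stronger equality $Tf(\xi(x)) = \zeta(f(x))$, from which $\le_\Lambda$ follows trivially (any $t$ satisfies $t \le_\Lambda t$, as $\le_\Lambda$ is a preorder by definition). So the "forward" half is immediate.

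The real content is the converse direction: I must show that $(\mathrm{graph}\,f)^{-1}$ is a $\Lambda$-simulation, i.e.\ that whenever $f(x) = y$ (so $y\,(\mathrm{graph}\,f)^{-1}\,x$ via the pair $(y,x)$), then for all $\M\in\Lambda$ and all $B \subseteq Y$, $\zeta(y)\models\M B$ implies $\xi(x)\models\M (\mathrm{graph}\,f)^{-1}[B]$. Note $(\mathrm{graph}\,f)^{-1}[B] = f^{-1}[B]$. So the goal is: $\zeta(f(x))\models\M B$ implies $\xi(x)\models\M f^{-1}[B]$. Here I again use the morphism equation $\zeta(f(x)) = Tf(\xi(x))$ to rewrite the hypothesis as $Tf(\xi(x))\models\M B$, and then naturality of the predicate lifting $\bbrack{\M}$ gives exactly $\xi(x)\models\M f^{-1}[B]$. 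This is the same computation as the `only if' direction of Lemma \ref{lem:hom} but run in reverse and without even needing monotonicity — the equality (rather than just $\le_\Lambda$) makes both inclusions available.

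I do not anticipate a serious obstacle; the only point requiring a little care is bookkeeping with the direction of the converse relation and the identity $(\mathrm{graph}\,f)^{-1}[B] = f^{-1}[B]$, and being explicit about which of the two coalgebra-morphism facts — the naturality square for $T$ versus the equation $\zeta\circ f = Tf\circ\xi$ — is used where. One could alternatively phrase the whole argument uniformly via Lemma \ref{lem:hom} applied to $f$ together with the remark following it (that $\xi(x)\models\M f^{-1}[B]$ iff $\zeta(f(x))\models\M B$, which is the "if and only if" strengthening available precisely because $f$ is a morphism), thereby getting both simulation conditions from a single observation; I would likely present it that way for brevity.
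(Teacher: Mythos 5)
Your proposal is correct and follows essentially the same route as the paper: the forward half is Lemma~\ref{lem:hom} applied to the morphism equality $Tf(\xi(x))=\zeta(f(x))$, and the converse half is the observation that $(\mathrm{graph}\,f)^{-1}[A]=f^{-1}[A]$ together with naturality of the predicate liftings, which is exactly the paper's argument.
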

  \begin{proof}
    It follows from Lemma~\ref{lem:hom} that the graph of $f$ is a
    $\Lambda$-simulation. To see that its converse is a
    $\Lambda$-simulation, let $C=(X,\xi)$, $D=(Y,\zeta)$, and let
    $x\in X$, $\M\in\Lambda$, $A\subseteq Y$ such that
    $\zeta(f(x))\models\M A$. Now $\zeta(f(x))=Tf(\xi(x))$ because $f$
    is a coalgebra morphism, so we obtain $\xi(x)\models\M f^{-1}[A]$
    by naturality of predicate liftings, as required. \qed
  \end{proof}

  \noindent
  It is easy to see that $\Lambda$-$n$-bisimulations preserve and
  reflect the truth of formulas with up to $n$ nested modalities. A
  similar notion of preservation, \emph{$n$-step-equivalence} was
  considered in~\cite{SchroderPattinson10b}, obtained by projecting
  into the terminal sequence. We can show that $n$-step-equivalence
  coincides with $\Lambda$-$n$-bisimilarity when $\Lambda$ is
  separating.

\begin{defn}
  The \emph{terminal sequence} of a given functor $T$ is the sequence
  given by $T_0 = 1$ (some singleton set) and $T_{n+1} = TT_n$,
  connected by functions $p_n : T_{n+1} \to T_n$, where $p_{n+1} =
  Tp_n$. Every $T$-coalgebra $C=(X,\xi)$ defines a cone over the
  terminal sequence by $\xi_0 :C \to 1$ (uniquely defined) and
  $\xi_{n+1} = T\xi_n \circ \xi$. Given $T$-coalgebras
  $(X,\xi)$ and $(Y,\zeta)$ and elements $x \in X$, $y \in Y$, we say
  that $x$ and $y$ are \emph{$n$-step equivalent} (notation: $x
  \approx_n y$) whenever $\xi_n(x) = \zeta_n(y)$.
\end{defn}

\begin{lem}
  Let $C=(X,\xi)$ and $D=(Y,\zeta)$ be $T$-coalgebras. The
  $n$-step-equivalence relation $\approx_n \subseteq X \times Y$ is a
  $\Lambda$-$n$-bisimulation.
\end{lem}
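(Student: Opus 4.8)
The plan is to prove the statement by induction on $n$. For $n = 0$, the relation $\approx_0$ is automatically a $\Lambda$-$0$-bisimulation since every relation $S_0 \subseteq X \times Y$ is a $\Lambda$-$0$-simulation by definition. For the inductive step, I assume that $\approx_n$ is a $\Lambda$-$n$-bisimulation and show $\approx_{n+1}$ is a $\Lambda$-$(n+1)$-bisimulation. First I would observe that $\approx_{n+1} \subseteq \approx_n$: this follows because $\xi_n = p_n \circ \xi_{n+1}$ (by the cone equations $\xi_{n+1} = T\xi_n \circ \xi$ together with $p_{n+1} = Tp_n$ and an easy induction), so $\xi_{n+1}(x) = \zeta_{n+1}(y)$ forces $\xi_n(x) = \zeta_n(y)$.

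The heart of the argument is to verify the simulation condition: if $x \approx_{n+1} y$, then for all $\M \in \Lambda$ and $A \subseteq X$, $\xi(x) \models \M A$ implies $\zeta(y) \models \M \approx_n[A]$. The key observation is that $\xi_{n+1}(x) = T\xi_n(\xi(x))$, and by naturality of the predicate lifting $\bbrack{\M}$ applied to $\xi_n : X \to T_n$, we have $T\xi_n(\xi(x)) \models \M B \iff \xi(x) \models \M \xi_n^{-1}[B]$ for any $B \subseteq T_n$. So I would take $B = \xi_n[A]$: then $\xi(x) \models \M A$ implies $\xi(x) \models \M \xi_n^{-1}[\xi_n[A]]$ by monotony (since $A \subseteq \xi_n^{-1}[\xi_n[A]]$), hence $\xi_{n+1}(x) \models \M \xi_n[A]$, hence $\zeta_{n+1}(y) \models \M \xi_n[A]$ since $\xi_{n+1}(x) = \zeta_{n+1}(y)$, hence $\zeta(y) \models \M \zeta_n^{-1}[\xi_n[A]]$ by naturality again. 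It remains to check $\zeta_n^{-1}[\xi_n[A]] \subseteq \approx_n[A]$ and conclude by monotony: if $y' \in \zeta_n^{-1}[\xi_n[A]]$, then $\zeta_n(y') = \xi_n(x')$ for some $x' \in A$, which is exactly $x' \approx_n y'$, so $y' \in \approx_n[A]$.

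This establishes that $\approx_{n+1}$ is a $\Lambda$-$(n+1)$-simulation with witnessing $\Lambda$-$n$-simulation $\approx_n$ (which is one by the induction hypothesis). Since $\approx_{n+1}$ is symmetric — it is defined by the equation $\xi_{n+1}(x) = \zeta_{n+1}(y)$, so $(\approx_{n+1})^{-1}$ is just the $n{+}1$-step-equivalence relation from $D$ to $C$ — the same argument with the roles of $C$ and $D$ swapped shows $(\approx_{n+1})^{-1}$ is a $\Lambda$-$(n+1)$-simulation as well. Hence $\approx_{n+1}$ is a $\Lambda$-$(n+1)$-bisimulation, completing the induction.

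The main obstacle I anticipate is purely bookkeeping: keeping straight the interplay between the terminal-sequence cone equations ($\xi_{n+1} = T\xi_n \circ \xi$, $p_{n+1} = Tp_n$, and the derived $p_n \circ \xi_{n+1} = \xi_n$) and the naturality squares for the predicate liftings along the maps $\xi_n$ and $\zeta_n$. There is no conceptual difficulty — it is the same monotony-plus-naturality pattern already used in the proof of Lemma~\ref{lem:hom} and Lemma~\ref{lem:mor} — but one must be careful that the witnessing relation in the $\Lambda$-$(n+1)$-simulation clause is exactly $\approx_n$, which is why the containment $\approx_{n+1} \subseteq \approx_n$ must be dispatched first.
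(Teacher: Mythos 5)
Your proof is correct and follows essentially the same route as the paper: reduce to showing $\approx_{n+1}$ is a $\Lambda$-$(n+1)$-simulation with witness $\approx_n$, then chain monotony ($A \subseteq \xi_n^{-1}[\xi_n[A]]$), naturality along $\xi_n$, the equation $\xi_{n+1}(x)=\zeta_{n+1}(y)$, naturality along $\zeta_n$, and finally $\zeta_n^{-1}[\xi_n[A]] \subseteq {\approx_n}[A]$ plus monotony, concluding by symmetry. The only difference is cosmetic (elementwise rather than point-free notation, and you spell out $\approx_{n+1}\subseteq\approx_n$, which the paper only asserts).
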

\begin{proof}
  Of course, it suffices to show that $\approx_n$ is a
  $\Lambda$-$n$-simulation. We proceed by induction on $n$. Clearly,
  $\approx_0 = X \times Y$ is a $\Lambda$-$0$-simulation.  For the
  inductive step, let $x \approx_{n+1} y$ and let $\M\in\Lambda$, $A
  \subseteq X$ such that $\xi(x) \in \M_C A$. We then have (writing
  $\fP$ and $\fQ$ for the covariant and contravariant powerset
  functors, respectively):

 \begin{align*}
 \xi(x) \models \M A
   &\impl \xi(x) \in \Sem{\M}_C\circ\fQ\xi_n\circ \fP\xi_n A
     &&\text{(monotony)}
     \\
   &\impl \xi(x) \in \fQ(T\xi_n)\circ\Sem{\M}_{T_n}\circ\fP\xi_n A
     &&\text{(naturality)}
     \\
   &\impl x \in \fQ\xi \circ \fQ(T\xi_n)\circ \Sem{\M}_{T_n}\circ \fP\xi_n A
     \\
   &\impl x \in \fQ\xi_{n+1} \circ \Sem{\M}_{T_n} \circ \fP\xi_n A
     &&\text{(functoriality)}
     \\
   &\impl y \in \fQ\zeta_{n+1} \circ \fP\xi_{n+1}\circ \fQ\xi_{n+1}\circ\Sem{\M}_{T_n} \circ \fP\xi_nA
     &&\text{($x \approx_{n+1} y$)}
     \\
   &\impl y \in \fQ\zeta_{n+1}\circ\Sem{\M}_{T_n}\circ\fP\xi_n A
     &&\text{($(\fP f \circ \fQ f) X \subseteq X$)}
     \\
   & \qquad = \fQ\zeta \circ \fQ(T\zeta_n) \circ \Sem{\M}_{T_n} \circ \fP\xi_n A
     \\
   &\impl \zeta(y) \in \fQ(T\zeta_n)\circ\Sem{\M}_{T_n}\circ\fP\xi_n A
     \\
   &\impl \zeta(y) \in \Sem{\M}_D\circ \fQ\zeta_n\circ\fP\xi_n A
     &&\text{(naturality)}
     \\
   &\impl \zeta(y) \models \M\approx_n\!\![A].
 \end{align*}
 By the inductive hypothesis, $\approx_n$ is a
 $\Lambda$-$n$-simulation, and, moreover, $\approx_{n+1} \subseteq
 \approx_n$, so $\approx_{n+1}$ is a
 $\Lambda$-$(n+1)$-simulation. \qed
\end{proof}
\noindent
Of course, the converse of this lemma does not hold in general (e.g.,
take $T$ to be the multiset functor and consider $\Lambda =
\cset{\Diamond_0}$). However, we do have the following.

\begin{thm}\label{thm:n-bisim-n-step-equiv}
If $\Lambda$ is a separating set of predicate liftings, then $S_n \subseteq {\approx_n}$ for every
$\Lambda$-$n$-bisimulation $S_n$.
\end{thm}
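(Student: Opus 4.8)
The plan is to prove the statement by induction on $n$, mirroring the structure of the preceding lemma but running the argument in the opposite direction and crucially invoking separation. For $n=0$ there is nothing to prove since $\approx_0 = X\times Y$. For the inductive step, suppose $S_{n+1}$ is a $\Lambda$-$(n+1)$-bisimulation; then by definition there is a $\Lambda$-$n$-bisimulation $S_n$ with $S_{n+1}\subseteq S_n$ witnessing the clause for both $S_{n+1}$ and its converse, and by the inductive hypothesis $S_n\subseteq{\approx_n}$. Fix $x S_{n+1} y$; I must show $\xi_{n+1}(x)=\zeta_{n+1}(y)$, i.e.\ $T\xi_n(\xi(x)) = T\zeta_n(\zeta(y))$ as elements of $T T_n$. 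Since $\Lambda$ is separating, it suffices to show that for every $\M\in\Lambda$ and every $B\subseteq T_n$ we have $T\xi_n(\xi(x))\models\M B \iff T\zeta_n(\zeta(y))\models\M B$. By naturality of the predicate lifting this is equivalent to $\xi(x)\models\M\,\xi_n^{-1}[B] \iff \zeta(y)\models\M\,\zeta_n^{-1}[B]$.

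So the task reduces to: given $x S_{n+1} y$, $\M\in\Lambda$, $B\subseteq T_n$, show $\xi(x)\models\M\,\xi_n^{-1}[B]$ implies $\zeta(y)\models\M\,\zeta_n^{-1}[B]$ (the converse direction being symmetric, using that $S_{n+1}^{-1}$ is also a $\Lambda$-$(n+1)$-simulation with the same witness $S_n$, whose converse is $S_n^{-1}$, again contained in $\approx_n$ since $\approx_n$ is symmetric). Apply the $\Lambda$-$(n+1)$-simulation clause to $A := \xi_n^{-1}[B]$: from $\xi(x)\models\M A$ we obtain $\zeta(y)\models\M\,S_n[A]$. Now I claim $S_n[\xi_n^{-1}[B]] \subseteq \zeta_n^{-1}[B]$: if $y'\in S_n[\xi_n^{-1}[B]]$ then $x' S_n y'$ for some $x'$ with $\xi_n(x')\in B$, and since $S_n\subseteq{\approx_n}$ we get $\zeta_n(y')=\xi_n(x')\in B$, i.e.\ $y'\in\zeta_n^{-1}[B]$. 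Hence by monotony $\zeta(y)\models\M\,\zeta_n^{-1}[B]$, as desired.

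Putting these together: separation lets us reduce equality of $(n+1)$-step images to agreement on all $(\M,B)$; naturality pushes that back from $TT_n$ to $TX$/$TY$ along $\xi_n,\zeta_n$; the bisimulation clause plus the containment $S_n\subseteq{\approx_n}$ (inductive hypothesis) plus monotony closes the gap, the key set-theoretic fact being $S_n[\xi_n^{-1}[B]]\subseteq\zeta_n^{-1}[B]$. Finally, since $S_{n+1}\subseteq S_n\subseteq{\approx_n}$ is not quite what we want — we need $S_{n+1}\subseteq{\approx_{n+1}}$ — the argument above establishes exactly that, and we are done.

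The main obstacle, and the only place the separation hypothesis is essential, is the very first reduction step: without separation, agreement of $\xi(x)$ and $\zeta(y)$ (modulo $S_n$) on all modal formulas/liftings does not force equality of the underlying elements of $TT_n$, which is why the converse of the previous lemma fails in general (the multiset/$\Diamond_0$ example). Everything else — naturality, monotony, the image computation — is routine bookkeeping of the same flavour as in the proof that $\approx_n$ is a $\Lambda$-$n$-bisimulation, just read backwards. One should also take a little care that the witness $S_n$ for $S_{n+1}$ can be taken the same for both $S_{n+1}$ and $S_{n+1}^{-1}$, or else run the two directions with their respective witnesses $S_n, S_n'$, noting both are contained in $\approx_n$ by the inductive hypothesis, which is all the argument uses.
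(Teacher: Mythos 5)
Your proof is correct and takes essentially the same route as the paper's: induction on $n$, reduction via separation and naturality to the modal clause at level $n$, then the $(n+1)$-simulation condition, the inductive hypothesis $S_n\subseteq{\approx_n}$, and monotony via the containment $S_n[\xi_n^{-1}[B]]\subseteq\zeta_n^{-1}[B]$ (the paper phrases this last step as passing from $S_n$ to $\approx_n$ and using $\approx_n[\xi_n^{-1}[A]]=\zeta_n^{-1}[A]$, which is the same computation). One small caveat: your closing fallback of running the two directions with their respective \emph{simulation} witnesses $S_n,S_n'$ and citing the inductive hypothesis does not literally apply, since the inductive hypothesis concerns $n$-\emph{bisimulations}; your main line, which (like the paper) takes the witness $S_n$ to be an $n$-bisimulation as intended by the definition, is the right reading.
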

\begin{proof}
  Induction on $n$.  Let $C=(X,\xi)$, $D=(Y,\zeta)$ be $T$-coalgebras,
  let $S_{n+1}\subseteq X\times Y$ be a
  $\Lambda$-$(n+1)$-bisimulation, and let $x S_{n+1} y$. Let
  $S_n\supseteq S_{n+1}$ be an $n$-bisimulation as in the definition
  of $\Lambda$-$(n+1)$-bisimilarity.

  We show $\xi_{n+1}(x)=\zeta_{n+1}(y)$ using separation. Thus, let
  $\M \in \Lambda$, $A \subseteq T_n$. We have to show that
  $\xi_{n+1}(x) \models \M A$ iff $\zeta_{n+1}(y)\models\M A$; by
  symmetry, it suffices to prove `only if'.  Since
  $\xi_{n+1}=T\xi\xi_n$, we have, by naturality, $\xi(x)\models
  \M\xi_n^{-1}[A]$. By simulation, it follows that $\zeta(y)\models \M
  S_n[\xi_n^{-1}[A]]$. By the inductive hypothesis, $S_n \subseteq
  \approx_n$, so that we obtain $\zeta(y)\models \M
  \approx_n\!\![\xi_n^{-1}[A]]$ by monotony. Now
  $\approx_n\!\![\xi_n^{-1}[A]]=\zeta_n^{-1}[A]$ by definition of
  $\approx_n$, and hence $\zeta_{n+1}(y)=T\zeta_n\zeta(y)\models\M A$
  by naturality. \qed
\end{proof}
\noindent In other words, $\Lambda$-$n$-bisimulation is always
complete for $n$-step equivalence, and sound if $\Lambda$ is
separating.

Similar results hold for $\Lambda$-bisimulations. Specifically, we
have
\begin{lem}\label{lem:complete}
  The behavioural equivalence relation $\approx$ between two given
  $T$-coalgebras is a $\Lambda$-bisimulation.
\end{lem}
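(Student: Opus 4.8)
The plan is to combine the two closure properties of Lemma~\ref{lem:stability} with Lemma~\ref{lem:mor}. The condition defining a $\Lambda$-bisimulation is symmetric --- $S$ is a $\Lambda$-bisimulation iff both $S$ and $S^{-1}$ are $\Lambda$-simulations --- and $\approx$ is itself a symmetric relation, since interchanging the roles of $C$ and $D$ turns a cospan witnessing $x\approx y$ into one witnessing $y\approx x$. Hence it suffices to show that $\approx$, viewed as a relation from $C$ to $D$, is a $\Lambda$-simulation; the claim for $\approx^{-1}$ then follows by running the same argument with $C$ and $D$ exchanged.

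First I would rewrite $\approx$ as a union. Unfolding the definition of behavioural equivalence, $x\approx y$ holds exactly when there is a $T$-coalgebra $E$ together with coalgebra morphisms $f:C\to E$ and $g:D\to E$ with $f(x)=g(y)$, and therefore
\begin{equation*}
  {\approx} \;=\; \bigcup_{E,f,g}\,\{(x,y)\mid f(x)=g(y)\},
\end{equation*}
where the union ranges over all cospans $C\xrightarrow{f}E\xleftarrow{g}D$. This union is indexed by a proper class, but that is harmless: every relation appearing under the union sign, and hence the union itself, is a subset of the fixed set $X\times Y$.

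The key observation is that each of the relations $\{(x,y)\mid f(x)=g(y)\}$ is the relational composite of the graph of $f$ with the converse of the graph of $g$. By Lemma~\ref{lem:mor}, graphs of coalgebra morphisms are $\Lambda$-bisimulations, so the graph of $f$ is a $\Lambda$-simulation $C\to E$ and the converse of the graph of $g$ is a $\Lambda$-simulation $E\to D$. By the composition part of Lemma~\ref{lem:stability} their composite is again a $\Lambda$-simulation, and by the union part of the same lemma the whole of $\approx$ is a $\Lambda$-simulation, as required.

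I do not expect a genuine obstacle: the earlier lemmas do the real work, and what remains is bookkeeping. The two points to keep an eye on are fixing the direction of the relational composition so that the composite comes out as $\{(x,y)\mid f(x)=g(y)\}$ (and not its converse), and noting that the stability under unions in Lemma~\ref{lem:stability} is meant to --- and does --- cover the class-indexed union displayed above.
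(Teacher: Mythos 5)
Your proof is correct and is essentially the paper's own argument: both reduce to showing $\approx$ is a $\Lambda$-simulation, use Lemma~\ref{lem:mor} to get that the relation $\{(x',y')\mid f(x')=g(y')\}$ (graph of $f$ composed with the converse graph of $g$) is a $\Lambda$-simulation via Lemma~\ref{lem:stability}. The only cosmetic difference is the final step: the paper notes this relation is contained in $\approx$ and invokes monotony directly, whereas you package the same monotony argument as closure of $\Lambda$-simulations under the (harmlessly class-indexed) union over all cospans.
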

\noindent In other words, $\Lambda$-bisimulation is always complete
for behavioural equivalence.
\begin{proof}
  Let $C=(X,\xi)$, $D=(Y,\zeta)$ be $T$-coalgebras; it suffices
  to show that behavioural equivalence $\approx$ (as a relation between $X$ and $Y$)
  is a $\Lambda$-simulation between $C$ and $D$.
  Given $x \approx y$, $\M\in\Lambda$ and $A\subseteq X$ such that $\xi(x)\models\M A$,
  we then have to show that $\zeta(y)\models\M(\approx\!\![A])$.
  So let $E$ be a $T$-coalgebra and $f:C\to E$ and $g:D\to E$ be coalgebra
  morphisms such that $f(x)=g(y)$.
  By Lemma~\ref{lem:mor}, and by stability of simulations under
  composition, the relation $g^{-1}f=\{(x',y')\mid f(x')=g(y')\}$
  is a $\Lambda$-simulation.  Thus, we have $\zeta(y)\models\M
  g^{-1}[f[A]]$; and because $g^{-1}f$ is contained in $\approx$
  we are done by monotony. \qed
\end{proof}
\noindent As in the bounded-depth setting, soundness depends, of
course, on separation:
\begin{thm}
  If $\Lambda$ is separating, then $\Lambda$-bisimilarity is sound and
  complete for behavioural equivalence.
\end{thm}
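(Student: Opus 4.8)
The plan is as follows. Completeness is immediate from Lemma~\ref{lem:complete}: the behavioural equivalence relation between any two $T$-coalgebras is a $\Lambda$-bisimulation, and this needs no separation hypothesis. So all the work is in soundness: given a $\Lambda$-bisimulation $S\subseteq X\times Y$ between $C=(X,\xi)$ and $D=(Y,\zeta)$ with $xSy$, I must produce a $T$-coalgebra $E$ and coalgebra morphisms $f:C\to E$, $g:D\to E$ with $f(x)=g(y)$.

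First I would form the equivalence relation $R$ on the disjoint union $X\sqcup Y$ generated by $S$ (the reflexive--transitive closure of $S\cup S^{-1}$, read inside $(X\sqcup Y)^2$), put $Z:=(X\sqcup Y)/R$ with quotient $q:X\sqcup Y\to Z$, and let $q_X,q_Y$ be the composites of $q$ with the two coproduct injections. Using that $T$ preserves injections (Assumption~\ref{ass:inj}), I regard $\xi(u)\in TX$ and $\zeta(u)\in TY$ as lying in $T(X\sqcup Y)$, and define the candidate structure $\omega:Z\to TZ$ by $\omega([u]):=Tq(\xi(u))$ for $u\in X$ and $\omega([u]):=Tq(\zeta(u))$ for $u\in Y$. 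Once $\omega$ is shown well defined, $q_X:C\to(Z,\omega)$ and $q_Y:D\to(Z,\omega)$ are coalgebra morphisms by construction, $Z$ is non-empty, and $q_X(x)=[x]=[y]=q_Y(y)$ since $xSy$; thus $(Z,\omega)$ together with $q_X,q_Y$ witnesses $x\approx y$.

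The only point that uses separation, and the main obstacle, is the well-definedness of $\omega$: I must show that $u\mathrel{R}v$ forces $Tq$ of the structure at $u$ to equal $Tq$ of the structure at $v$ in $TZ$. Here separation says two elements of $TZ$ agree iff they validate the same pairs $(\M,B)$ with $B\subseteq Z$; and naturality of predicate liftings along $q$ (and along the coproduct injections) reduces ``$Tq$ of the structure at $w$ validates $(\M,B)$'' to the predicate $\Phi(w)\equiv$ ``$\xi(w)$ (resp.\ $\zeta(w)$) validates $\M$ applied to $q^{-1}[B]$ intersected with the appropriate summand''. As $q^{-1}[B]$ is $R$-saturated, $\Phi$ is $R$-invariant once it is preserved along single $S$- and $S^{-1}$-steps; and for a step $uSv$ with $u\in X$, $v\in Y$, writing $C=q^{-1}[B]\cap X$ and $C'=q^{-1}[B]\cap Y$, $R$-saturation gives $S[C]\subseteq C'$ and $S^{-1}[C']\subseteq C$, so the two simulation conditions (for $S$ and for $S^{-1}$) together with monotony of $\M$ yield $\xi(u)\models\M C\iff\zeta(v)\models\M C'$. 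Everything else --- that $q_X$, $q_Y$ are then morphisms, non-emptiness of $Z$, and the reduction to single generator steps --- is routine bookkeeping around the injection conventions of Assumption~\ref{ass:inj}.
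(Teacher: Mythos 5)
Your proposal is correct and follows essentially the same route as the paper: completeness via Lemma~\ref{lem:complete}, and soundness by quotienting the disjoint union $X+Y$ by the equivalence relation generated by $S$, then using separation, naturality, the two simulation conditions, and monotony to verify that the induced coalgebra structure on the quotient is well defined on the generating pairs $xSy$. Your explicit remarks about $R$-saturation of $q^{-1}[B]$ and the reduction to single $S$/$S^{-1}$-steps just spell out what the paper's ``it suffices to show \dots whenever $xSy$'' leaves implicit.
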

\begin{proof}
  As stated above, Lemma~\ref{lem:complete} proves completeness; it
  remains to show soundness. Let $C=(X,\xi)$ and $D=(Y,\zeta)$ be
  $T$-coalgebras, and let $S\subseteq X\times Y$ be a
  $\Lambda$-bisimulation. Let $Z$ be the quotient of the disjoint sum
  $X+Y$ by the equivalence relation generated by $S$, and let
  $\kappa_1:X\to Z$ and $\kappa_2:Y\to Z$ denote the prolongations of
  the coproduct injections into the quotient. It suffices to define a
  coalgebra structure $\chi$ on $Z$ that makes $\kappa_1$ and $\kappa_2$ into
  coalgebra morphisms. We thus have to show that putting
  \begin{align*}
    \chi(\kappa_1(x)) & = T\kappa_1(\xi(x))\\
    \chi(\kappa_2(y)) & = T\kappa_2(\zeta(x))
  \end{align*}
  yields a well-defined map $Z\to TZ$. To this end, it suffices to
  show that $T\kappa_1(\xi(x))= T\kappa_2(\zeta(y))$ whenever
  $xSy$. We prove this using separation by showing that
  $T\kappa_1(\xi(x))\models\M A$ iff $T\kappa_2(\zeta(y))\models \M A$
  for $\M\in\Lambda$, $A\subseteq Z$. We prove only the left-to-right
  implication, the converse one being symmetric. So let
  $T\kappa_1(\xi(x))\models\M A$. Then $\xi(x)\models\M\kappa_1^{-1}[A]$
  by naturality, and hence $\zeta(y)\models\M S[\kappa_1^{-1}[A]]$
  since $S$ is a $\Lambda$-simulation. Now clearly
  $S[\kappa_1^{-1}[A]]\subseteq\kappa_2^{-1}[A]$, so that
  $\zeta(y)\models\M\kappa_2^{-1}[A]$ by monotony. We are done by
  naturality. \qed

\end{proof}
In the case where $T$ preserves weak pullbacks, it is well-known that
$T$-bisimilarity in the sense of Aczel and Mendler is also sound and
complete for behavioural equivalence, so that $T$-bisimilarity and
$\Lambda$-bisimilarity coincide when $\Lambda$ is separating. But we
can do better: $T$-bisimulations are $\Lambda$-bisimulations (so
$\Lambda$-simulations are at least as convenient a tool as
$T$-bisimulations), and for $T$ preserving weak pullbacks and
$\Lambda$ separating, difunctional $\Lambda$-bisimulations are
$T$-bisimulations. We recall the relevant definitions:
\begin{defn}
  A \emph{$T$-bisimulation} between $T$-coalgebras $(X,\xi)$ and
  $(Y,\zeta)$ is a relation $S\subseteq X\times Y$ such that there
  exists a coalgebra structure $\rho:S\to TS$ that makes the
  projections $S\to X$ and $S\to Y$ into coalgebra morphisms.
\end{defn}

\begin{defn}
  A binary relation $S\subseteq X\times Y$ is \emph{difunctional} if
  whenever $xSy$, $zSy$, and $zSw$, then $xSw$.
\end{defn}
\noindent Essentially, we obtain a difunctional relation if we take an
equivalence relation $S$ on the disjoint union $X+Y$ of two sets and
restrict it to $X\times Y$, i.e.\ take $S\cap (X\times Y)$ (where
originally $S\subseteq(X+Y)\times(X+Y)$).

We now prove that all $T$-bisimulations are $\Lambda$-bisimulations,
for any $\Lambda$ and $T$, and that the converse holds for
difunctional relations if $T$ preserves weak pullbacks. We conjecture
that the assumption of difunctionality can actually be
removed. Nevertheless, we note the following. To begin, every relation
$S\subseteq X\times Y$ has a difunctional closure $\bar S$, where
$x\bar S y$ iff there exists chains $x=x_0,\dots,x_n$ in $X$ and
$y_0,\dots,y_n=y$ in $Y$ such that $x_iSy_i$ for $i=0,\dots,n$ and
$x_{i+1}Sy_i$ for $i=0,\dots,n-1$.
\begin{defn}
  A \emph{$\Lambda$-bisimulation up to difunctionality} between
  $T$-coalgebras $(X,\xi)$ and $(Y,\zeta)$ is a relation $S\subseteq
  X\times Y$ such that whenever $xSy$ and $\xi(x)\models\M A$ for
  $\M\in\Lambda$, $A\subseteq X$, then $\zeta(y)\models\M\bar S[A]$,
  where $\bar S$ denotes the difunctional closure of $S$, and the
  analogous condition holds for $S^{-1}$.
\end{defn}
\begin{propn}\label{prop:difunctional}
  Let $S\subseteq X\times Y$ be a relation between $T$-coalgebras
  $(X,\xi)$ and $(Y,\zeta)$. Then $S$ is a $\Lambda$-bisimulation up
  to difunctionality iff the difunctional closure of $S$ is a
  $\Lambda$-bisimulation.
\end{propn}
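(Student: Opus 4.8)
The plan is to prove the two directions separately, exploiting the fact that the difunctional closure $\bar S$ is itself difunctional and that $\bar S[A]$ appears on the right-hand side of both defining conditions.

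\medskip\noindent
\textbf{Proof plan.} \emph{($\Leftarrow$)} Suppose $\bar S$ is a $\Lambda$-bisimulation. Whenever $xSy$ we also have $x\bar Sy$ (take the length-zero chain), so applying the $\Lambda$-simulation property of $\bar S$ gives: $\xi(x)\models\M A$ implies $\zeta(y)\models\M\bar S[A]$. That is exactly the forth-condition in the definition of $\Lambda$-bisimulation up to difunctionality; the back-condition follows symmetrically from $\bar S^{-1}$ being a $\Lambda$-simulation (and one should note $\overline{S^{-1}} = (\bar S)^{-1}$, which is immediate from the chain description). Hence $S$ is a $\Lambda$-bisimulation up to difunctionality.

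\medskip\noindent
\emph{($\Rightarrow$)} Suppose $S$ is a $\Lambda$-bisimulation up to difunctionality. I want to show $\bar S$ is a $\Lambda$-simulation (and then by the same argument applied to $S^{-1}$, using $\overline{S^{-1}}=(\bar S)^{-1}$, that $(\bar S)^{-1}$ is a $\Lambda$-simulation too, giving the claim). So assume $x\bar Sy$, witnessed by chains $x=x_0,\dots,x_n$ and $y_0,\dots,y_n=y$ with $x_iSy_i$ and $x_{i+1}Sy_i$. Assume $\xi(x)\models\M A$; I must derive $\zeta(y)\models\M\bar S[A]$. The key observation is that $\bar S[A]$ absorbs arbitrary further zig-zag steps: if $B$ is any set and $z\in\bar S[B]$, then applying one $S$-step or one $S^{-1}$-step to an element of $B$ lands in a set still contained in $\bar S[B]$, because prepending a zig or a zag to a witnessing chain produces another witnessing chain. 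Concretely, I will prove by induction on $i$ that $\xi(x_i)\models\M \bar S[A]$ is false in general but rather track the correct invariant: using $x_0 S y_0$ and the forth-condition, $\xi(x_0)\models\M A$ gives $\zeta(y_0)\models\M\bar S[A]$; using $x_1 S y_0$ and the back-condition for $S$ (i.e.\ the forth-condition for $S^{-1}$, which says $\zeta(y)\models\M B$ implies $\xi(x)\models\M\overline{S^{-1}}[B]$ when $xSy$), and monotony together with $\overline{S^{-1}}[\bar S[A]] = \bar S[A]$ (a relational identity following from the chain description: a zig-zag over the reversed relation composed with $\bar S$ is still captured by $\bar S$), we get $\xi(x_1)\models\M\bar S[A]$; iterating, $\zeta(y_i)\models\M\bar S[A]$ for all $i$, in particular $\zeta(y_n)=\zeta(y)\models\M\bar S[A]$.

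\medskip\noindent
\textbf{Main obstacle.} The routine Boolean bookkeeping is trivial; the only real content is the relational lemma $\overline{S^{-1}}[\bar S[A]] = \bar S[A]$ and its mirror $\bar S[\overline{S^{-1}}[A]]=\bar S[A]$, i.e.\ that $\bar S$ is already saturated under post-composition with one more zig-zag step on either side. I expect this — unwinding the explicit chain definition of $\bar S$ and checking that prepending/appending a single $S$- or $S^{-1}$-link to a chain yields another valid chain of the same form — to be the crux, everything else being an induction along the witnessing chain using monotony of the liftings and the two defining implications of ``bisimulation up to difunctionality''.
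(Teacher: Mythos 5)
Your overall strategy is the paper's own: the right-to-left direction is immediate from $S\subseteq\bar S$, and for left-to-right you induct along the witnessing zig-zag chain, alternating the forth condition at the links $x_iSy_i$ and the back condition at the links $x_{i+1}Sy_i$, using difunctionality of $\bar S$ to keep the target set pinned at $\bar S[A]$. The paper does exactly this by setting $A_0=A$, $B_i=\bar S[A_i]$, $A_{i+1}=\bar S^{-1}[B_i]$ and observing that $B_i=\bar S[A]$ for all $i$ by difunctionality of $\bar S$; your chain induction is the same argument, and the reduction of the $(\bar S)^{-1}$ half to symmetry (via $\overline{S^{-1}}=(\bar S)^{-1}$) also matches the paper.

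However, the ``crux'' identity you isolate is garbled and, as literally stated, cannot hold for type reasons: $\overline{S^{-1}}[\bar S[A]]=\bar S^{-1}[\bar S[A]]$ is a subset of $X$, whereas $\bar S[A]$ is a subset of $Y$, so the equation $\overline{S^{-1}}[\bar S[A]]=\bar S[A]$ (and its ``mirror'') does not typecheck; correspondingly, the intermediate claim $\xi(x_1)\models\M\bar S[A]$ is ill-typed, since on the $X$-side $\M$ must be applied to a subset of $X$. The invariant you should carry is $\zeta(y_i)\models\M\bar S[A]$ together with $\xi(x_{i+1})\models\M\bar S^{-1}[\bar S[A]]$, and the identity actually needed is $\bar S\bigl[\bar S^{-1}[\bar S[A]]\bigr]=\bar S[A]$ (only the inclusion into $\bar S[A]$ is needed, thanks to monotony). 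This is not a lemma requiring any unwinding of chains: it is precisely difunctionality of $\bar S$ stated in image form (if $a\bar Sy'$, $x'\bar Sy'$, $x'\bar Sy$ then $a\bar Sy$), so all you need is that the difunctional closure is difunctional, plus the observation you already make that $\overline{S^{-1}}=(\bar S)^{-1}$. With these repairs your induction goes through and coincides with the paper's proof; so the gap is not in the idea but in the statement of the key step, which as written would block the iteration.
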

\begin{proof}
  `If' is trivial; we show `only if'. Let $\bar S$ be the difunctional
  closure of $S$. Let $\M\in\Lambda$, $A\subseteq X$ such that $\xi(x)
  \models \M A$, and let $x\bar S y$, i.e.\ we have
  $x=x_0,\dots,x_n\in X$ and $y_0,\dots,y_n=y\in Y$ such that
  $x_iSy_i$ for $i=0,\dots,n$ and $x_{i+1}Sy_i$ for
  $i=0,\dots,n-1$. We define $A_0,\dots,A_n\subseteq X$ and
  $B_0,\dots,B_n\subseteq Y$ inductively by $A_0=A$, $B_i=\bar
  S[A_i]$, and $A_{i+1}=\bar S^{-1}[B_i]$. By induction,
  $\xi(x_i)\models\M A_i$ and $\zeta(y_i)\models\M B_i$ for all
  $i$. Moreover, by difunctionality of $\bar S$, $B_i= \bar S[A]$ for
  all $i$, so that $\zeta(y)=\zeta(y_n)\models\M\bar S[A]$ as
  required. The proof that $\bar S^{-1}$ is also a
  $\Lambda$-simulation is completely analogous.\qed
\end{proof}
\begin{cor}
  Let $\Lambda$ be separating. Then $\Lambda$-bisimilarity up to
  difunctionality is sound and complete for $T$-behavioural
  equivalence.
\end{cor}
\noindent To complement this, we explicitly define a notion of
$T$-bisimulation up to difunctionality:
\begin{defn}
  A \emph{$T$-bisimulation up to difunctionality} between
  $T$-coalgebras $(X,\xi)$ and $(Y,\zeta)$ is a relation $S\subseteq
  X\times Y$ such that there exists a map $\rho:S\to T\bar S$, where
  $\bar S$ denotes the difunctional closure of $S$, such that $T\bar
  p_1\rho=\xi p_1$ and $T\bar p_2\rho=\zeta p_2$. Here $p_1:S\to X$,
  $p_2:S\to Y$, $\bar p_1:\bar S\to X$, and $\bar p_2:\bar S\to Y$
  denote the projections.
\end{defn}
\noindent It does not seem clear in general that an analogue of
Proposition~\ref{prop:difunctional} holds for $T$-bisimulations. For
the case where $T$ preserves weak pullbacks, such an analogue will
follow from the identification with $\Lambda$-bisimulations.

\begin{thm}
  Every $T$-bisimulation (up to difunctionality) is a
  $\Lambda$-bisimulation (up to difunctionality).
\end{thm}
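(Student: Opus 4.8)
The plan is to prove the two implications separately but by essentially the same computation, treating the ``up to difunctionality'' version as the general case (the plain version is recovered by taking $S$ difunctional, so that $\bar S = S$). So let $S \subseteq X \times Y$ be a $T$-bisimulation up to difunctionality, with witnessing map $\rho : S \to T\bar S$ satisfying $T\bar p_1 \rho = \xi p_1$ and $T\bar p_2 \rho = \zeta p_2$. I want to show that for $xSy$, $\M \in \Lambda$, $A \subseteq X$, if $\xi(x) \models \M A$ then $\zeta(y) \models \M \bar S[A]$ (and symmetrically for $S^{-1}$, which is identical after swapping the roles of $p_1,p_2$ and $\bar p_1, \bar p_2$).

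First I would fix the witnessing data and unwind what $\xi(x) \models \M A$ means in terms of $\rho$. Since $xSy$, the pair $(x,y)$ is an element $s \in S$ with $p_1(s) = x$, $p_2(s) = y$. From $T\bar p_1 \rho = \xi p_1$ we get $\xi(x) = T\bar p_1(\rho(s))$, and from $T\bar p_2 \rho = \zeta p_2$ we get $\zeta(y) = T\bar p_2(\rho(s))$. Now apply naturality of $\bbrack{\M}$: $\xi(x) \models \M A$ means $T\bar p_1(\rho(s)) \in \bbrack{\M}_X A$, which by naturality of the predicate lifting along $\bar p_1 : \bar S \to X$ is equivalent to $\rho(s) \in \bbrack{\M}_{\bar S}(\bar p_1^{-1}[A])$, i.e.\ $\rho(s) \models \M \bar p_1^{-1}[A]$.

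The key step is then a monotony argument: I claim $\bar p_1^{-1}[A] \subseteq \bar p_2^{-1}[\bar S[A]]$. Indeed, if $s' \in \bar S$ with $\bar p_1(s') \in A$, write $s' = (a, b)$ with $a \bar S b$ and $a \in A$; then $b \in \bar S[A]$, so $\bar p_2(s') = b \in \bar S[A]$, i.e.\ $s' \in \bar p_2^{-1}[\bar S[A]]$. Hence by monotony of $\M$ we get $\rho(s) \models \M \bar p_2^{-1}[\bar S[A]]$, and applying naturality of $\bbrack{\M}$ along $\bar p_2$ in the other direction gives $T\bar p_2(\rho(s)) \models \M \bar S[A]$, i.e.\ $\zeta(y) \models \M \bar S[A]$, as required. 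The symmetric condition for $S^{-1}$ is proved the same way with $\bar p_1 \leftrightarrow \bar p_2$. For the plain (not ``up to difunctionality'') statement, $\bar S = S$ and $\bar p_i = p_i$, so the same argument shows a $T$-bisimulation $S$ (with structure $\rho : S \to TS$) is a $\Lambda$-bisimulation directly.

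I do not expect any serious obstacle here: the proof is a short diagram-chase using only naturality and monotony of the predicate liftings, exactly in the style of Lemmas~\ref{lem:hom} and~\ref{lem:mor}. The only point requiring a moment's care is the set-inclusion $\bar p_1^{-1}[A] \subseteq \bar p_2^{-1}[\bar S[A]]$ (and its mirror image), which is where the relational image $\bar S[A]$ enters and where one must keep straight which projection corresponds to which side of the relation; once that inclusion is in place, monotony does the rest. Note that this direction needs no weak-pullback assumption at all — that hypothesis is only needed for the converse, which is presumably the following theorem.
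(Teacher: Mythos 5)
Your proof is correct and follows essentially the same route as the paper's: express $\xi(x)$ and $\zeta(y)$ as $T\bar p_1(\rho(x,y))$ and $T\bar p_2(\rho(x,y))$, apply naturality, and conclude by monotony from the inclusion $\bar p_1^{-1}[A]\subseteq\bar p_2^{-1}[\bar S[A]]$ (resp.\ $p_1^{-1}[A]\subseteq p_2^{-1}[S[A]]$ in the plain case). The only slight imprecision is the opening claim that the plain case is ``recovered by taking $S$ difunctional''---a plain $T$-bisimulation need not be difunctional, and reducing to the up-to version would only yield $\M\bar S[A]$ rather than $\M S[A]$---but your closing remark, rerunning the identical computation with $S$, $p_i$ in place of $\bar S$, $\bar p_i$, is exactly what the paper does and settles that case.
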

\begin{proof}
  Let $(X,\xi)$ and $(Y,\zeta)$ be $T$-coalgebras. For the plain case,
  let $S\subseteq X\times Y$ be a $T$-bisimulation between them. Thus,
  we have $\rho:S\to TS$ such that $p_1:S\to X$ and $p_2:S\to Y$ are
  coalgebra morphisms. Now let $\M\in\Lambda$, $A\subseteq X$, and
  $xSy$ such that $\xi(x)\models\M A$. We have to show
  $\zeta(y)\models\M S[A]$. Now $\xi(x)= Tp_1\rho(x,y)$, and hence
  $\rho(x,y)\models\M p_1^{-1}[A]$. Since $\zeta(y)=Tp_2\rho(x,y)$, we
  have to show $\rho(x,y)\models\M p_2^{-1} S[A]$. By monotonicity, it
  suffices to show that $p_1^{-1}[A]\subseteq p_2^{-1} S[A]$. So let
  $(x',y')\in S$ such that $x'\in A$; we have to show $y'\in S[A]$, which
  holds by definition of $S[A]$.

  For the second part, let $S$ be a $T$-bisimulation up to
  difunctionality between $(X,\xi)$ and $(Y,\zeta)$, and let $\bar S$
  denote the difunctional closure of $S$. Thus, we have $\rho:S\to
  T\bar S$ such that $T\bar p_1\rho=\xi p_1$ and $T\bar p_2\rho=\zeta
  p_2$, where $p_1:S\to X$, $p_2:S\to Y$, $\bar p_1:\bar S\to X$,
  $\bar p_2:\bar S\to Y$ denote the projections. Let $\M\in\Lambda$,
  $A\subseteq X$ such that $\xi(x)\models\M A$; we have to show
  $\zeta(y)\models\M\bar S[A]$. As above, we find that we equivalently
  need to show $\rho(x,y)\models\M \bar p_2^{-1}[\bar S[A]]$ from
  $\rho(x,y)\models\M \bar p_1^{-1}[A]$, which follows from $\bar
  p_1^{-1}[A]\subseteq \bar p_2^{-1}[\bar S[A]]$.\qed
\end{proof}
\noindent The announced partial converse to this is
\begin{thm}\label{thm:t-bisim}
  If $\Lambda$ is separating and $T$ preserves weak pullbacks, then
  difunctional $\Lambda$-bisimulations are $T$-bisimulations, and
  $\Lambda$-bisimulations up to difunctionality are $T$-bisimulations
  up to difunctionality.
\end{thm}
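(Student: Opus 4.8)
The plan is to start from a difunctional $\Lambda$-bisimulation $S$ and manufacture the required coalgebra structure $\rho : S \to TS$ by exploiting separation on $TS$ together with the weak-pullback assumption. Since $\Lambda$ is separating, an element of $TS$ is uniquely pinned down by the pairs $(\M,C)$ with $C\subseteq S$ that it satisfies; so it suffices, for each $(x,y)\in S$, to specify a consistent-and-complete such set and argue it is realized. The natural candidate is to declare $\rho(x,y)\models\M C$ precisely when $\xi(x)\models\M p_1[C]$ (equivalently, using the bisimulation property, when $\zeta(y)\models\M p_2[C]$); the two coalgebra-morphism equations $Tp_1\circ\rho = \xi\circ p_1$ and $Tp_2\circ\rho=\zeta\circ p_2$ then hold by construction, via naturality of predicate liftings, once we know such a $\rho(x,y)\in TS$ actually exists.

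The existence step is where weak pullbacks enter. Consider the pullback (in $\Set$) of $\xi\circ p_1 : S\to TX$ and $\zeta\circ p_2 : S \to TY$ against $T$ applied to the two projections $\bar p_1:\bar S\to X$, $\bar p_2:\bar S\to Y$ of the difunctional closure — more precisely, form the weak pullback $P$ of $Tp_1' : T\bar S\to TX$ and $Tp_2' : T\bar S\to TY$ where I regard $S\subseteq \bar S\times\bar S$ appropriately, or simply the pullback of $T\bar p_1$ and $T\bar p_2$ over $TX\leftarrow \cdot\rightarrow TY$. The point is that $(\xi(x),\zeta(y))$ lies in the image of $\langle T\bar p_1,T\bar p_2\rangle$ whenever $(x,y)\in S$: this is exactly a diagonal-fill/weak-pullback-preservation argument, using that $\xi(x)$ and $\zeta(y)$ satisfy the same modal formulas over the relation $S$ (which is the $\Lambda$-bisimulation hypothesis, transported through difunctionality so that $\bar S[A]$ and $S[A]$ behave compatibly). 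Because $T$ preserves weak pullbacks, the preimage functor interacts well with $Q$, and separation upgrades ``the two elements satisfy the same $\M$-formulas'' to ``there is a common witness in $T\bar S$'' — this witness, once we check it lands in $TS\subseteq T\bar S$ (using that $T$ preserves the injection $S\into\bar S$, Assumption~\ref{ass:inj}), is the desired $\rho(x,y)$.

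Concretely the steps are: (1) fix $(x,y)\in S$ and define the candidate ``theory'' $\Theta_{x,y}=\{(\M,C)\mid C\subseteq S,\ \xi(x)\models\M p_1[C]\}$; (2) show, using that $S$ and $S^{-1}$ are $\Lambda$-simulations plus difunctionality of $S$, that $(\M,C)\in\Theta_{x,y}$ iff $\zeta(y)\models\M p_2[C]$ — the difunctionality is needed to get $p_2[C]$ rather than some larger set, mirroring the bookkeeping in the proof of Proposition~\ref{prop:difunctional}; (3) realize $\Theta_{x,y}$ by an element $t_{x,y}\in TS$ via the weak-pullback argument of the previous paragraph; (4) invoke separation on $TS$ to get uniqueness, hence well-definedness of $\rho(x,y):=t_{x,y}$ as a function of $(x,y)$; (5) read off $Tp_1\rho=\xi p_1$ and $Tp_2\rho=\zeta p_2$ from $\Theta_{x,y}$ by naturality (for $\M,A\subseteq X$: $Tp_1\rho(x,y)\models\M A \iff \rho(x,y)\models\M p_1^{-1}[A] \iff \xi(x)\models\M p_1[p_1^{-1}[A]]$, and then monotony plus $p_1[p_1^{-1}[A]]\subseteq A$ in one direction, $A\cap\mathrm{im}\,p_1$ considerations in the other — this needs a small argument that $\xi(x)$ concentrates on $\mathrm{im}\,p_1$, itself a consequence of $S$ being a full bisimulation). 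The up-to-difunctionality clause then follows by combining this with Proposition~\ref{prop:difunctional}: if $S$ is a $\Lambda$-bisimulation up to difunctionality, $\bar S$ is a genuine $\Lambda$-bisimulation, hence (being difunctional) a $T$-bisimulation by the part just proved, and unwinding the definition of $T$-bisimulation up to difunctionality gives exactly the required $\rho:S\to T\bar S$ by restriction along $S\into\bar S$.

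I expect the main obstacle to be step (3): getting an \emph{actual} element of $TS$ out of the matching modal theories. Separation gives uniqueness for free but not existence; existence is precisely what weak-pullback preservation buys, but wiring it up requires care about which pullback to take and about staying inside $TS$ rather than drifting into $T\bar S$ or $T(X\times Y)$. A clean way to organize it is: let $W$ be the weak pullback of $\xi\circ p_1$ and $\zeta\circ p_2$ — but those have codomains $TX$ and $TY$, so instead take the weak pullback of $T\bar p_1$ and $T\bar p_2$ and show the pair $(\xi(x),\zeta(y))$ factors through it, which is where difunctionality of $\bar S$ and the bisimulation conditions combine; then project that witness back down to $T\bar S$, check it satisfies exactly $\Theta_{x,y}$ by naturality, and finally check it lies in $TS$. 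Everything else (the Boolean/naturality manipulations, monotony, the $S[\cdot]$ bookkeeping) is routine and parallels arguments already carried out in the proofs of Lemma~\ref{lem:mor}, Lemma~\ref{lem:complete}, and Proposition~\ref{prop:difunctional} above.
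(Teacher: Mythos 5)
There is a genuine gap, and it sits exactly where you predicted: step (3). Your plan never actually produces the witness $\rho(x,y)\in TS$, because you never build the cospan that the weak-pullback hypothesis needs. You propose to "take the weak pullback of $T\bar p_1$ and $T\bar p_2$", but $T\bar p_1:T\bar S\to TX$ and $T\bar p_2:T\bar S\to TY$ form a \emph{span}, not a cospan, so there is no pullback of one against the other; and the reformulation "$(\xi(x),\zeta(y))$ lies in the image of $\langle T\bar p_1,T\bar p_2\rangle$" is precisely the existence claim to be proved, for which you offer only the remark that "separation upgrades matching theories to a common witness" --- which is false (separation gives uniqueness, never existence, as you yourself note a paragraph later). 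The missing idea is the paper's: form the \emph{pushout} $Z$ of the span $X\xleftarrow{p_1}S\xrightarrow{p_2}Y$, giving a cospan $q_1:X\to Z$, $q_2:Y\to Z$; difunctionality of $S$ is used exactly to guarantee that this pushout square is also a pullback, so that weak-pullback preservation makes $TS$ (with $Tp_1,Tp_2$) a weak pullback of $Tq_1,Tq_2$. One then shows by separation that $Tq_1(\xi(x))=Tq_2(\zeta(y))$ for $xSy$, i.e.\ that $\xi(x)\models\M q_1^{-1}[A]$ iff $\zeta(y)\models\M q_2^{-1}[A]$ for all $A\subseteq Z$, which follows from the simulation property plus $S[q_1^{-1}[A]]\subseteq q_2^{-1}[A]$ (the $q_i^{-1}[A]$ are saturated under the equivalence generated by $S$) and monotony; the mediating map of the weak pullback is then $\rho:S\to TS$, and the equations $Tp_1\rho=\xi p_1$, $Tp_2\rho=\zeta p_2$ come for free, so no theory bookkeeping or "concentration on $\mathrm{im}\,p_1$" argument is needed.

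A second, independent problem is your step (2): the claimed equivalence $\xi(x)\models\M p_1[C]$ iff $\zeta(y)\models\M p_2[C]$ for arbitrary $C\subseteq S$ does not follow from the bisimulation property even for difunctional $S$. The simulation condition applied to $A=p_1[C]$ only yields $\zeta(y)\models\M S[p_1[C]]$, and $S[p_1[C]]$ can properly contain $p_2[C]$: take $S=\{(a,b),(a,b')\}$ (difunctional) and $C=\{(a,b)\}$, so $S[p_1[C]]=\{b,b'\}\supsetneq\{b\}=p_2[C]$. The sets for which the biconditional does hold are exactly the saturated ones, i.e.\ those of the form $q_1^{-1}[A]$, $q_2^{-1}[A]$ for $A\subseteq Z$ --- which is again why the quotient $Z$ is the right device. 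Your handling of the second clause of the theorem (via Proposition~\ref{prop:difunctional} and restriction along $S\into\bar S$) is correct and matches the paper, but it is conditional on the first part, which as proposed does not go through.
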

\noindent 


\begin{proof}
  For the first part, let $S\subseteq X\times Y$ be a difunctional
  $\Lambda$-bisimulation between $T$-coalgebras $(X,\xi)$ and
  $(Y,\zeta)$. Let $p_1:S\to X$ and $p_2:S\to Y$ denote the
  projections. Let
  \begin{equation*}
    \xymatrix{S \ar[r]^{p_1}\ar[d]_{p_2} & X \ar[d]^{q_2} \\
      Y \ar[r]_{q_2} & Z}
  \end{equation*}
  be a pushout; since $S$ is difunctional, this is also a
  pullback. Now observe that the square
  \begin{equation*}
    \xymatrix{S \ar[r]^{p_1}\ar[d]_{p_2} & X \ar[r]^\xi & TX \ar[dd]^{Tq_2} \\
      Y \ar[d]_\zeta \\
    TY \ar[rr]_{Tq_2} & & TZ}
  \end{equation*}
  commutes. To show this, we use separation: let $\M\in\Lambda$, and
  let $A\subseteq Z$. After one application of naturality, we have to
  show that when $xSy$ then $\xi(x)\models\M q_1^{-1}[A]$ iff
  $\zeta(x)\models\M q_2^{-1}[A]$. We show `only if': observe that $Z$
  arises from $X+Y$ by quotienting modulo the equivalence relation
  $\sim_S$ generated by $S$. Thus $q_1^{-1}[A]$ consists of the
  elements of $X$ that are $\sim_S$-equivalent to some element of $A$,
  similarly for $q_2^{-1}[A]$. From $\xi(x)\models\M q_1^{-1}[A]$ we
  conclude $\zeta(y)\models\M S[q_1^{-1}[A]]$ because $S$ is a
  $\Lambda$-simulation. But $S[q_1^{-1}[A]]\subseteq q_2^{-1}[A]$
  because clearly each element of $S[q_1^{-1}[A]]$ is
  $\sim_S$-equivalent to an element of $q_1^{-1}[A]$ and hence to an
  element of $A$. Therefore, $\zeta(y)\models\M q_2^{-1}[A]$. The
  converse implication is shown dually.

  For the second part, let $S$ be a $\Lambda$-bisimulation up to
  difunctionality. By Proposition~\ref{prop:difunctional}, the
  difunctional closure $\bar S$ of $S$ is a $\Lambda$-bisimulation and
  hence, by the first part, a $T$-bisimulation. By composing the
  $T$-coalgebra structure $\rho:\bar S\to T\bar S$ as in the definition of
  $T$-bisimulation with the inclusion $S\into\bar S$, we see that $S$
  is a $T$-bisimulation up to difunctionality.\qed
\end{proof}

\begin{cor}
  If $T$ preserves weak pullbacks, then $T$-bisimulations up to
  difunctionality are sound (and complete) for $T$-behavioural equivalence.
\end{cor}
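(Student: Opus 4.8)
The plan is to derive this corollary directly from Theorem~\ref{thm:t-bisim} together with the soundness/completeness results already established for $\Lambda$-bisimilarity up to difunctionality. The statement to prove is: if $T$ preserves weak pullbacks, then $T$-bisimulations up to difunctionality are sound (and complete) for $T$-behavioural equivalence. Since $T$ preserving weak pullbacks is the standing hypothesis, I may freely use that the category has a separating set of predicate liftings available --- but in fact I do not even need this in full generality, because behavioural equivalence is an intrinsic notion not depending on $\Lambda$; I only need \emph{some} separating $\Lambda$ to bridge the two notions, and such a $\Lambda$ exists precisely in the cases of interest. Let me sketch both directions.

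For \textbf{soundness}, let $S$ be a $T$-bisimulation up to difunctionality between $T$-coalgebras $(X,\xi)$ and $(Y,\zeta)$, and suppose $xSy$. By the definition, there is a map $\rho : S \to T\bar S$ with $T\bar p_1\rho = \xi p_1$ and $T\bar p_2\rho = \zeta p_2$. Post-composing $\rho$ with $T$ applied to the inclusion $S \into \bar S$ is not needed here; rather, the point is that the difunctional closure $\bar S$ carries the structure of an honest $T$-bisimulation: indeed, since $S$ is difunctional-up-to, one checks (this is essentially the content already used in the proof of Theorem~\ref{thm:t-bisim}, second part, read in reverse) that $\bar S$ admits a coalgebra structure making both projections $\bar p_1, \bar p_2$ coalgebra morphisms. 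Then the standard fact that $T$-bisimulations are sound for behavioural equivalence gives $\bar p_1(x,y) \approx \bar p_2(x,y)$, i.e.\ $x \approx y$. Alternatively, and more in keeping with the preceding development: pick any separating $\Lambda$ of monotone predicate liftings; by Theorem~\ref{thm:t-bisim} (its contrapositive is not what we want --- we want the forward direction applied to the plain case after taking difunctional closure), $\bar S$ is a $\Lambda$-bisimulation, hence by the soundness half of the theorem ``If $\Lambda$ is separating, then $\Lambda$-bisimilarity is sound and complete for behavioural equivalence'' we get $x \approx y$.

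For \textbf{completeness}, suppose $x \approx y$. By Lemma~\ref{lem:complete}, the behavioural equivalence relation $\approx$ between $(X,\xi)$ and $(Y,\zeta)$ is a $\Lambda$-bisimulation (for any monotone $\Lambda$), hence in particular a $\Lambda$-bisimulation up to difunctionality. Fixing a separating $\Lambda$ and invoking Theorem~\ref{thm:t-bisim} again, $\approx$ --- being a $\Lambda$-bisimulation up to difunctionality --- is a $T$-bisimulation up to difunctionality. Since $x \approx y$, this exhibits the desired $T$-bisimulation up to difunctionality relating $x$ and $y$. (Even more simply, $\approx$ itself is already difunctional, so it is a genuine $T$-bisimulation by the first part of Theorem~\ref{thm:t-bisim}, and a fortiori a $T$-bisimulation up to difunctionality.)

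The \textbf{main obstacle} --- really the only subtlety --- is that the notion of $T$-bisimulation up to difunctionality is defined purely coalgebraically, with no reference to $\Lambda$, whereas the bridge results (Theorem~\ref{thm:t-bisim}, Lemma~\ref{lem:complete}) are phrased via a separating $\Lambda$. One must therefore note that when $T$ preserves weak pullbacks the existence of a separating set of monotone predicate liftings is exactly the setting in which the whole section operates, or else give the direct coalgebraic argument: that the difunctional closure of a $T$-bisimulation up to difunctionality is an ordinary $T$-bisimulation, which uses the weak-pullback-preservation of $T$ to glue the local structures $\rho$ along the zig-zag chains defining $\bar S$ (precisely the standard argument showing $T$-bisimulations compose when $T$ preserves weak pullbacks). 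Everything else is a routine assembly of results already proved, so the corollary follows in a few lines.
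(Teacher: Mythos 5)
Your main line is exactly the paper's own (implicit) derivation: the corollary is stated without proof precisely as the assembly you give --- a $T$-bisimulation up to difunctionality is a $\Lambda$-bisimulation up to difunctionality, its difunctional closure $\bar S$ is then a $\Lambda$-bisimulation by Proposition~\ref{prop:difunctional}, soundness follows from soundness of $\Lambda$-bisimilarity for separating $\Lambda$, and completeness follows from Lemma~\ref{lem:complete} together with Theorem~\ref{thm:t-bisim}. Two corrections to side remarks, though: the step showing that $\bar S$ is a $\Lambda$-bisimulation comes from the (unnumbered) theorem that every $T$-bisimulation up to difunctionality is a $\Lambda$-bisimulation up to difunctionality combined with Proposition~\ref{prop:difunctional}, not from Theorem~\ref{thm:t-bisim}, which goes in the opposite direction; and your purely coalgebraic alternative --- equipping $\bar S$ directly with a $T$-coalgebra structure by ``reading the second part of the proof in reverse'' or by the standard composition-of-bisimulations argument --- should not be presented as routine, since $S$ itself is not a $T$-bisimulation and the paper explicitly remarks that an analogue of Proposition~\ref{prop:difunctional} for $T$-bisimulations is unclear except via the identification with $\Lambda$-bisimulations. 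Finally, you are right to flag the separating-$\Lambda$ issue, but weak-pullback preservation alone does not yield a separating set of monotone predicate liftings (the known equivalence with lax extensions requires finitariness), so the honest reading is that the corollary inherits the separating $\Lambda$ of Theorem~\ref{thm:t-bisim} as an ambient hypothesis --- which is exactly how your argument in fact uses it.
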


\section{Conclusions}

We have introduced novel notions of $\Lambda$-simulation and
$\Lambda$-bisimulation that work well in a setting where the
coalgebraic type functor admits a separating set $\Lambda$ of monotone
predicate liftings. In particular, we have shown that
$\Lambda$-bisimilarity is, in this setting, always sound and complete
for $T$-behavioural equivalence, and moreover always admits a natural
notion of bisimulation up to difunctionality. We have shown that
$T$-bisimulations are always $\Lambda$-bisimulations, similarly for
versions up to difunctionality, and that the converse holds for
versions up to difunctionality in case $T$ preserves weak
pullbacks. We leave the question whether the converse holds in the
plain case under preservation of weak pullbacks as an open problem.

\bibliographystyle{splncs03}
\bibliography{coalgml}

\end{document}